\newtheorem{proposition}{Proposition}
\newtheorem{theorem}{Theorem}
\newtheorem{corollary}{Corollary}
\begin{document}
	%
	\title{
Prospect Theory Based Crowdsourcing for Classification in the Presence of Spammers}
	%
	%
	%
	
	\author{Baocheng~Geng,~\IEEEmembership{}        Qunwei~Li,~\IEEEmembership{}       and~Pramod~K.~Varshney,~\IEEEmembership{Life~Fellow,~IEEE}
		\thanks{This work was supported
        by NSF under Grant ENG 1609916}
		\thanks{B. \ Geng, Q.\ Li and P.~K.\ Varshney are with the Department of Electrical Engineering and Computer Science, Syracuse University, Syracuse, NY 13244 USA
			(e-mail: bageng@syr.edu; qli33@syr.edu; varshney@syr.edu).}}

\maketitle
\begin{abstract}
We consider the $M$-ary classification problem  via crowdsourcing, where crowd workers respond to simple binary questions and the answers are aggregated via decision fusion. The workers have a reject option to skip answering a question when they do not have the expertise, or when the confidence of answering that question correctly is low. We further consider that there are spammers in the crowd who respond to the questions with random guesses. Under the payment mechanism that encourages the reject option, we study the behavior of honest workers and spammers, whose objectives are to maximize their monetary rewards. To accurately characterize human behavioral aspects, we employ prospect theory to model the rationality of the crowd workers, whose perception of costs and probabilities are distorted based on some value and weight functions, respectively. Moreover, we  estimate the number of spammers and employ a weighted majority voting decision rule, where we assign an optimal weight for every worker to maximize the system performance. The probability of correct classification and asymptotic system performance are derived. We also provide simulation results to demonstrate the effectiveness of our approach.
\end{abstract}


\begin{IEEEkeywords}
Classification, crowdsourcing, spammers, human behavioral analysis,  distributed inference, information fusion, prospect theory 
\end{IEEEkeywords}

\section{Introduction}
\IEEEPARstart{C}{rowdsourcing} has attracted intense interest in recent years as a new paradigm for distributed inference. It harnesses the intelligence of the crowd,  by exploiting the inexpensive and online labor markets in an effective manner \cite{TapscottW2006,Howe2008Crowdsourcing,TapscottW2010,Yuen2011Survey,Bollier2011,Hossfeld2014best,li2017does}. Crowdsourcing enables a new framework to utilize distributed human wisdom to solve problems that machines cannot perform well, like handwriting recognition, anomaly detection, voice transcription, and image labelling \cite{paritosh2011computer,kamar2012combining,burrows2013paraphrase,7052378}. While conventional group collaboration and cooperation frameworks rely heavily on a collection of experts in related fields, the crowd in crowdsourcing usually consists of non-experts. Therefore, the responses obtained from the crowd have diverse quality levels, which makes decision fusion in the problem of classification via crowdsourcing quite challenging. 

Although crowdsourcing has been applied in many applications,
the quality of the aggregated result 
is relatively low \cite{IpeirotisPW2010,allahbakhsh2013quality,mo2013cross} due to the following reasons. First, the worker pool is anonymous in nature, which may result in an unskilled and unreliable crowd \cite{VarshneyVV2014}. Second, the assumption that the workers are sufficiently motivated, extrinsically or intrinsically, to take part seriously in the crowdsourcing task, is highly questionable \cite{Varshney2012e,hirth2013analyzing}. Third, for the non-expert crowd to successfully complete the crowdsourcing work, some tasks are specifically designed to be composed of easy but tedious microtasks \cite{VempatyVV2014}, which might cause boredom and result in low-quality work. Finally, noisy and unreliable responses to the tasks cannot be detected and tagged before aggregation so that appropriate weights could be assigned to responses \cite{yue2014weighted}. For this reason, simple majority voting is widely used as the aggregation rule and it takes all of the answers (including the noisy and low quality ones) into account with the same weight \cite{ruta2005classifier}.

Different methods have been developed to deal with the above problems \cite{VarshneyVV2014,hirth2013analyzing,VempatyVV2014,yue2014weighted}, \cite{KargerOS2011b,KargerOS2011a,6891807,QuinnB2011,zhang2012reputation}. 
    In \cite{KargerOS2011b}, the authors decompose a complex task into simple binary questions that are easy for the workers in the crowd to accomplish. It is expected that very little knowledge would be needed to complete the microtasks, and typically common sense or observation is good enough for such microtasks. The authors in \cite{VempatyVV2014} employ taxonomy and dichotomous keys in the design of the simple binary questions and the optimal question ordering problem in crowdsourcing is considered in \cite{geng2018decision}. These schemes that break hard questions into simple ones lower the chance for the workers to make mistakes in responding to each of the questions.  Different decision fusion rules are developed in order to deal with the unreliability of the crowd and increase  the classification accuracy \cite{yue2014weighted,6891807}. We have proposed the use of coding and decoding algorithms for reliable classification
  with unreliable crowd workers \cite{VarshneyVV2014,VempatyVV2014}. The comparison between  group control and majority voting techniques are presented in \cite{hirth2013analyzing}, which suggests that majority voting is more cost-efficient on simple binary tasks.

In the past literature, crowd workers could only submit a definitive yes/no answer in responding to a binary microtask/question.
However, research in psychology \cite{dhar1997consumer} indicates a frequent tendency to select the reject option  (no choice) when the choice set offers several attractive alternatives but none that can be easily justified as the best. In such cases, the workers may be unsure in answering some of the questions because of their lack of expertise. For instance, phonemes in some languages are very hard to distinguish, especially for foreigners \cite{varshney2016language}. To avoid requiring workers to respond to microtasks beyond their expertise resulting in random guesses, we considered the optimal design of the aggregation rule in crowdsourcing systems where the workers are not forced to make a binary choice when they are unsure of their response and can choose not to respond \cite{li2016multi}. As shown in \cite{jung2014predicting}, the quality of label prediction can be improved by adopting a decision rejection option to avoid results with low confidence. The reject option has also been considered in machine learning and signal processing literatures \cite{1054406, Bartlett:2008:CRO:1390681.1442792, doi:10.1287/deca.1080.0119, 5967817}. With a reject option, the payment mechanism is investigated in crowdsourcing systems where the workers can also report their confidence about the submitted answers \cite{shah2015double}.

Additionally, in crowdsourcing systems, there can be greedy crowd workers, also known as spammers, who aim to earn more monetary rewards by answering as many questions as possible. They often submit random guesses independent of the questions being asked. The presence of spammers degrades the system performance and has posed  a threat to many crowdsourcing applications \cite{wang2012serf}. Currently, there are two categories of anti-spammer techniques in crowdsourcing: a priori reputation system \cite{demartini2012zencrowd,raykar2012eliminating} and a posteri quality control \cite{liu2013scoring,downs2010your,difallah2013pick}. The first method aims to manage a pool of honest workers with high reputation, so as to ensure the reliability of their answers. However, since the crowd is usually large, anonymous and transient, it is impractical to keep track of the workers' answers and build up a trust relationship. In quality control schemes, several verifiable (golden standard) questions are inserted for the workers to answer and those who do not perform well on these questions are identified as spammers. However, due to the heterogeneous expertise levels of the crowd workers and the subjective criterion used to decide on a spammer, it is easy to mistakenly identify an honest worker to be a spammer, which discourages the worker from participating in answering the questions next time. According to the study in \cite{alonso2011crowdsourcing}, it is better to treat all the crowd workers to be honest, than risking to identify honest workers as spammers. 

We study the presence of spammers in the context of a scenario where crowd workers answer questions with a reject option. The payment mechanism proposed in \cite{shah2015double}, which encourages the workers to skip/reject answering a question below some confidence threshold, is employed.
In this scenario, both the honest workers and the spammers choose either to `answer a question'\footnote{In this case, honest workers submit their true answers and the spammers submit random guesses.} or `skip a question' to maximize their monetary rewards.
In \cite{shah2015double}, the authors assume that the workers are rational decision makers in the sense that they can perceive the expected payoff of taking each decision without biases. However,
since the crowd workers/spammers are humans, they are subject to cognitive biases in decision making and have disparate behavioral properties. The Nobel Prize winning prospect theory (PT) proposed by Kahneman and Tversky \cite{kahneman1979prospect} provides a psychologically accurate description of human cognitive biases that includes diminishing marginal utility, risk seeking/aversion and asymmetric valuation of gains and losses. In realistic crowdsourcing applications, these behavioral properties can not be ignored while developing effective decision making strategies of the honest crowd workers and the spammers.

We employ PT to model the rationality of the crowd workers, and study their behavior while answering or skipping the microtasks/questions\footnote{The terms `microtask' and `question' are used interchangeably in the paper.} in crowdsourcing that has a reject option. Based on the behavioral difference between the honest workers and the spammers,  we design the optimal aggregation rule at the fusion center (FC) to combat the effects of spammers. The contributions of our work are two fold: 
\begin{itemize}
\item By applying PT to model human cognitive biases, we study the optimal behavior of the honest workers and the spammers based on the payment mechanism proposed in \cite{shah2015double}. This payment mechanism has been proved to be the only mechanism that satisfies the ``no-free-lunch'' rule and exhibits incentive compatibility.  We  find  that  the  spammers should  either  complete  or  skip  all  the  microtasks  in  order  to get the maximal reward. The statistical behavioral properties of the crowd determine whether the spammers should complete or skip all the microtasks. 

\item We provide methods for estimating the number of spammers that is used for weight assignment. We also design an optimal aggregation rule where the workers are assigned appropriate weights\footnote{A brief discussion on the weight assignment was presented in \cite{li2018optimal}. In this work, we  elaborate on details and explanations related to this model.}. Probability of correct classification and asymptotic performance of our method are derived.
\end{itemize}

It should be noted that our approach only requires the workers to respond to several microtasks in one session without identifying themselves. Hence, our proposed method can be employed in many applications where the workers remain anonymous and, therefore, it is not required to keep track of the workers' profiles. On the other hand, instead of detecting the spammers, we estimate and employ the number of spammers while designing the optimal counter-measure to ameliorate their effects. The spammers still get paid according to the payment mechanism. As a result,  the risk of declaring honest workers as spammers and preventing them from further participation is avoided.

The rest of the paper is organized as follows. In Section II, we introduce the problem of multi-object classification via crowdsourcing with a reject option. In Section III, by employing PT, we study the optimal behavior of the honest crowd workers when their objective is to maximize their monetary rewards.  In Section IV, we consider that there are spammers in the crowd and develop a weight assignment scheme for optimal decision aggregation at the FC. Asymptotic performance analysis of our proposed method is presented in Section V. We  conduct simulation results in Section VI and conclude our work in Section VII.

\section{Classification via crowdsourcing with a Reject Option}

We formulate the  classification problem via crowdsourcing with a reject option in this section. Assume that we have $W$ workers participating in an $M$-ary classification task. There are $N = \left\lceil {{{\log }_2}M} \right\rceil$ simple binary questions to be answered by each worker, where we consider that the binary questions are independent of each other and are of the same difficulty. For each of the questions, the worker can either provide a definitive answer ``1'' (Yes) / ``0'' (No) \cite{VempatyVV2014,rocker2007paper}, or has a reject option to skip the question, where a skipped answer is denoted as  $\lambda$. Let ${\bf a}_w$ represent the $N$-bit word that contains the $w^{\mbox{th}}$ worker's ordered answers to all the microtasks, where ${\bf a}_w(i)\in \{1,0,\lambda\}$ for $i=1,\dots,N$. We assume the following statistical properties for the honest workers in the crowd: let $p_{w,i}$ be the probability that the $w^{\mbox{th}}$ worker submits $\lambda$ to the $i^{\mbox{th}}$ question, i.e, ${\bf a}_w(i)=\lambda$, and let $r_{w,i}$ be the probability that ${\bf a}_w(i)$ is the correct answer to the $i^{\mbox{th}}$ question, given that the worker has provided definitive answers ``1'' or ``0''. Since the workers in the crowd are anonymous and have diverse expertise levels, we consider that $p_{w,i}$ and $r_{w,i}$ are random and follow certain probability density functions (PDFs) $f_p(p)$ and $f_r(r)$, respectively. 
The expected values of $p_{w,i}$ and $r_{w,i}$, namely, the average probability that a worker submits $\lambda$ to a question, and the average probability that an answer is correct given a definitive answer has been submitted, are denoted by $m$ and $\mu$, respectively.



After the $N$-bit words regarding an object from all the workers are collected by the FC, the object needs to be classified to a class $d_j\in D$, $j=1,\dots,M$, where $D$ is the set of all the object classes and $d_j$ is the $j$th class. From the $N$-bit word ${\bf a}_w(i)$ submitted by the $w^{\mbox{th}}$ worker, one can infer the classification decision of the $w^{\mbox{th}}$ worker namely the subset of classes $D_w$\footnote{If all the responses from the $w^{\mbox{th}}$ worker are definitive, $D_w$ is a singleton. Otherwise, $D_w$ contains multiple classes.} to which the object belongs to. At the FC, each class $d_j$ inside $D_w$ increments its candidate score by the weight assigned to the $w^{\mbox{th}}$ worker $W_w$. After incorporating the responses from all the $W$ workers, the FC determines the class with the highest overall candidate score to be the final classification result:
\begin{align}\label{first}
d^*= \arg \mathop {\max }\limits_{{{d_j} \in {D}} }\left\{ \! \sum\limits_{w= 1}^W {{W_w}I_{{D_w}}\left\langle {{d_j}} \right\rangle } \!\right\}, j=1,\dots ,M,
\end{align}where $I_{{D_w}}\left\langle {{d_j}} \right\rangle=1$  if $d_j\in D_w$ and $I_{{D_w}}\left\langle {{d_j}} \right\rangle=0$ otherwise. The objective is to find the appropriate weight assignment $W_w$ for every worker in the crowd, so that the best classification performance can be achieved. One approach is to split the $M$-ary classification task into $N$ binary hypothesis testing problems, each of which determines a bit in the $N$-bit word. For each hypothesis testing problem, the Chair-Varshney rule gives the optimal weight as $W_w={\log \frac{{{r _{w,i}}}}{{1 - {r _{w,i}}}}}$ \cite{ChairV1986}. However, this requires the prior knowledge regarding $r_{w,i}$ for every worker, which is not available in practice. One may also  look into the minimization of the misclassification probability, for which a closed-form expression for $W_w$ cannot be derived due to the lack of prior knowledge of $p_{w,i}$ and $r_{w,i}$.

We developed a weight assignment scheme to optimize the crowd workers' weights \cite{li2016multi}:
\begin{equation}\label{problem}
	\begin{array}{l}
	\text{maximize}\ \ {E_C}\left[ {{\mathbb{W}}} \right]\\
	\text{subject to}\ \ {E_O}\left[ {{\mathbb{W}}} \right] = {K}
	\end{array}
	\end{equation}where ${E_C}\left[ {{\mathbb{W}}} \right]$ denotes the crowd's average weight contribution to the
correct class and  ${E_O}\left[ {{\mathbb{W}}} \right]$ denotes the average weight contribution to
all possible classes. We set $K$ to be a constant so that the portion of weight contribution to the correct
class is maximized while the weight contribution to all the classes remains fixed. By assuming that there are no spammers in the crowd, we showed that the weight assigned to the $w^{\mbox{th}}$ worker is $W_w=\mu^{-n}$, where $n$ represents the number of definitive answers the worker submits in total. This method significantly outperforms the conventional majority voting approach.

In this following sections, we investigate the impact of spammers on system performance. Based on the payment mechanism that encourages the workers to skip the questions about which they are not sure, we characterize the behavior of both honest workers and spammers in realistic environments where they are subject to cognitive biases while decision making. With this information, we estimate the number of spammers in the crowd and design the weight assignment strategy for every worker to ameliorate the impacts of the spammers and maximize the system performance.

\section{Behavior of the Honest Crowdworkers}

In this section, we consider that there are no spammers in the crowd and  explore the workers' behavior in answering or skipping a microtask.  We adopt the payment mechanism proposed in \cite{shah2015double}, which encourages the use of the reject option when the confidence of answering a question is low. The mechanism was proved to be the only incentive-compatible mechanism that satisfies the ``no-free-lunch'' axiom (``no-free-lunch'' axiom requires that the payment is minimum possible if all the answers attempted by the worker in the gold standard questions are wrong). Under this payment mechanism, we discuss the optimal behavior for the honest workers when they are rational decision makers and want to maximize their monetary rewards. Next, considering that the honest workers are human decision makers that are subject to cognitive biases in practice, we employ prospect theory to model their behavioral property and analyze their decision making strategy in realistic environments. Compared to the case where the workers are assumed to behave rationally, it is shown that the behavioral factors captured in PT may cause the humans to act quite differently. 

\subsection{Payment Mechanism and Optimal Behavior of Rational Crowd Workers}
The payment to the worker is based on  the answers that the worker gives to the $G$ gold standard questions (which are not known to the crowdworkers in advance). The goal of the mechanism is to incentivize the worker to skip the questions for which its confidence is lower than a threshold $T$, where confidence about an answer is the
probability of this answer being correct. The value of $T\in[0,1]$ is chosen \textit{a priori} based on factors such as the targeted performance quality. A larger value of $T$ leads to a higher probability that a  question is skipped (or equivalently, a lower probability that this question is answered). When $T$ is large, the answer has a higher probability of being correct  given that a definitive answer has been submitted. Let $f$ denote the payment rule, which is proposed in \cite{shah2015double} and is written as 
\begin{align}\label{payment}
f(x_1,\ldots,x_G)=\kappa\prod\limits_{i=1}^G \alpha_{x_i}+\mu_{\min}
\end{align}
where $x_j\in \{-1,\lambda,+1\}, 1\le j \le G$, are the responses to the gold standard questions. ``$-1$'' denotes that the worker attempted to answer the microtask and the answer was incorrect, ``$\lambda$'' denotes that the worker skipped the microtask, and ``$+1$'' denotes that the worker attempted to answer the microtask and the answer was correct. Set $ \alpha_{-1}=0, \alpha_{\lambda}=1, \alpha_{+1}=\frac 1 T$, and $\kappa =\left( \mu_{\max} -\mu_{\min}\right) T^G$ with budget parameters $\mu_{\max}$ and $\mu_{\min}$ denoting the maximum and minimum payments respectively. Note that $\mu_{\min}$ is a constant that represents the fixed reward, and $\kappa \prod\limits_{i=1}^G \alpha_{x_i}$ represents the variable reward that is determined by the worker's answers. According to the payment mechanism, the variable reward is multiplied by a factor of $\frac 1 T$ when the worker answers the microtask and the answer is correct, and the variable reward reduces to $0$ if the answer is wrong. Therefore, only when the worker's confidence towards a microtask is higher than $T$, the expected payoff is positive and it is beneficial for the worker to answer the microtask. Otherwise, there is a loss in expected reward if the answer to the microtask is incorrect and in this case the worker should use the skip option. Note that when the  confidence is exactly  equal to $T$, the worker can choose to either answer or skip the microtask. Under this payment mechanism, the workers are encouraged to use the reject option when they are not sure of their answer, i.e., when the confidence regarding the question is smaller than $T$. Next, we employ PT to model the rationality of the crowd workers in deciding to answer or skip a microtask.

\subsection{Behavior of Crowdworkers Predicted by Prospect Theory}

In this subsection, we analyze the behavior of the honest workers while considering their cognitive biases. We start with a brief introduction of PT. According to PT, human cognitive biases have the following two characteristics: (i) a human decision maker is loss averse in that he/she  strongly prefers avoiding losses than achieving gains, (ii) humans are risk seeking in events with small probabilities and are risk averse in events with large probabilities, which can be interpreted as people overweighting small probabilities and underweighting large probabilities. 
Under PT, humans are loss averse and have asymmetric valuation towards gains and losses through a value function: 

\begin{equation}\label{eq:value function}
v(x)=\left\{
\begin{array}{rcl}
x^\delta       &      & {x \geq 0}\\
-\beta(-x)^\delta     &      & {x < 0}
\end{array} \right.
\end{equation}where $v(x)$ assigns a subjective utility to an outcome $x$. A positive $x$ represents a gain and a negative $x$ represents a loss.  As $\beta$ changes, $v(x)$ reflects different loss aversion attitudes of humans. Under PT, gains and losses are perceived with respect to a reference point, which is usually the current wealth and is set to be $0$ in (\ref{eq:value function}). $\delta$ is the coefficient of diminishing marginal utility and it suggests that as more gains are added (or more losses are subtracted) to the present wealth, the subjective utility of an additional gain (or loss) becomes more insignificant. The value function $v(x)$ is illustrated in Fig. 1(a).

\begin{figure}[htb]
  \centering
  \subfigure[Value function]{ 
    \includegraphics[width=0.23\textwidth]{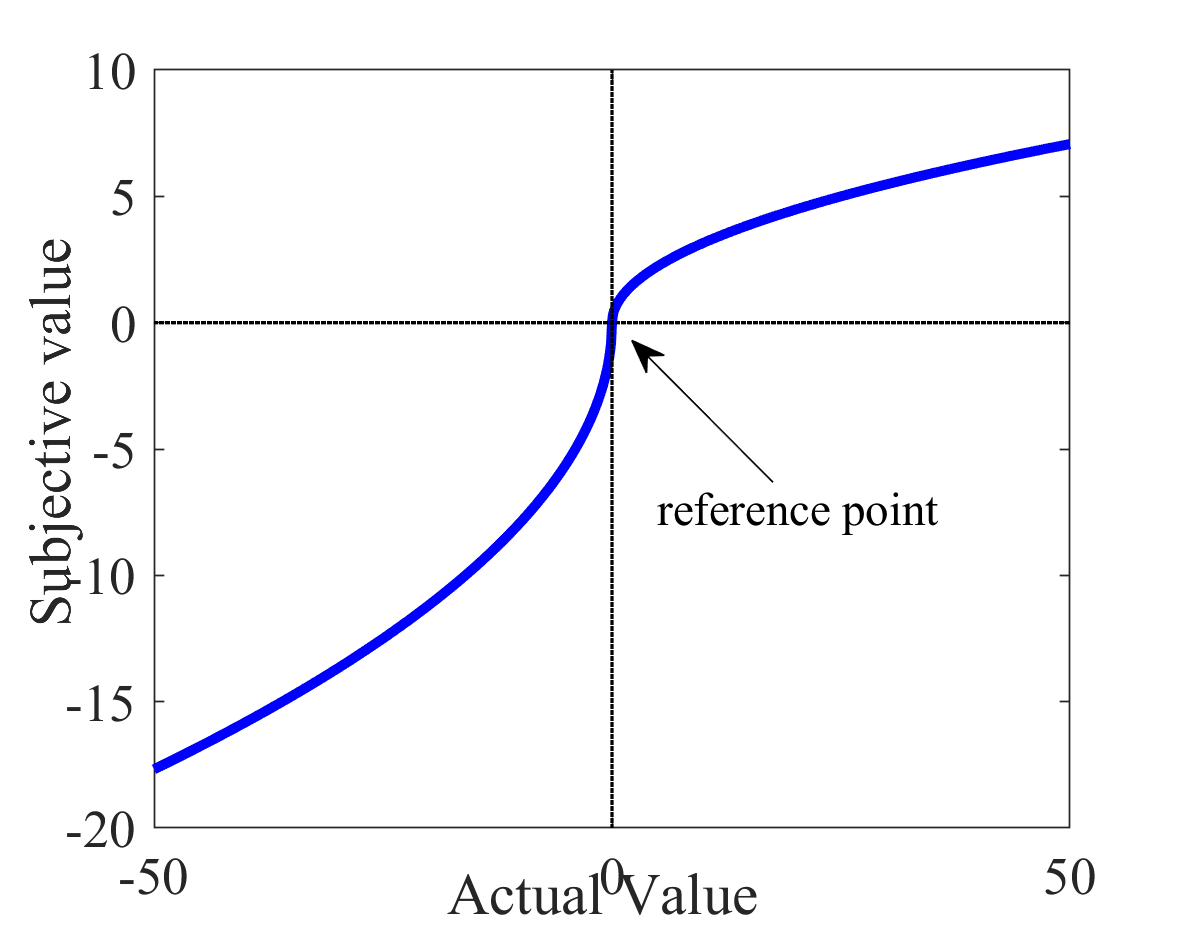}}
  \subfigure[Weight function]{
    \includegraphics[width=0.23\textwidth]{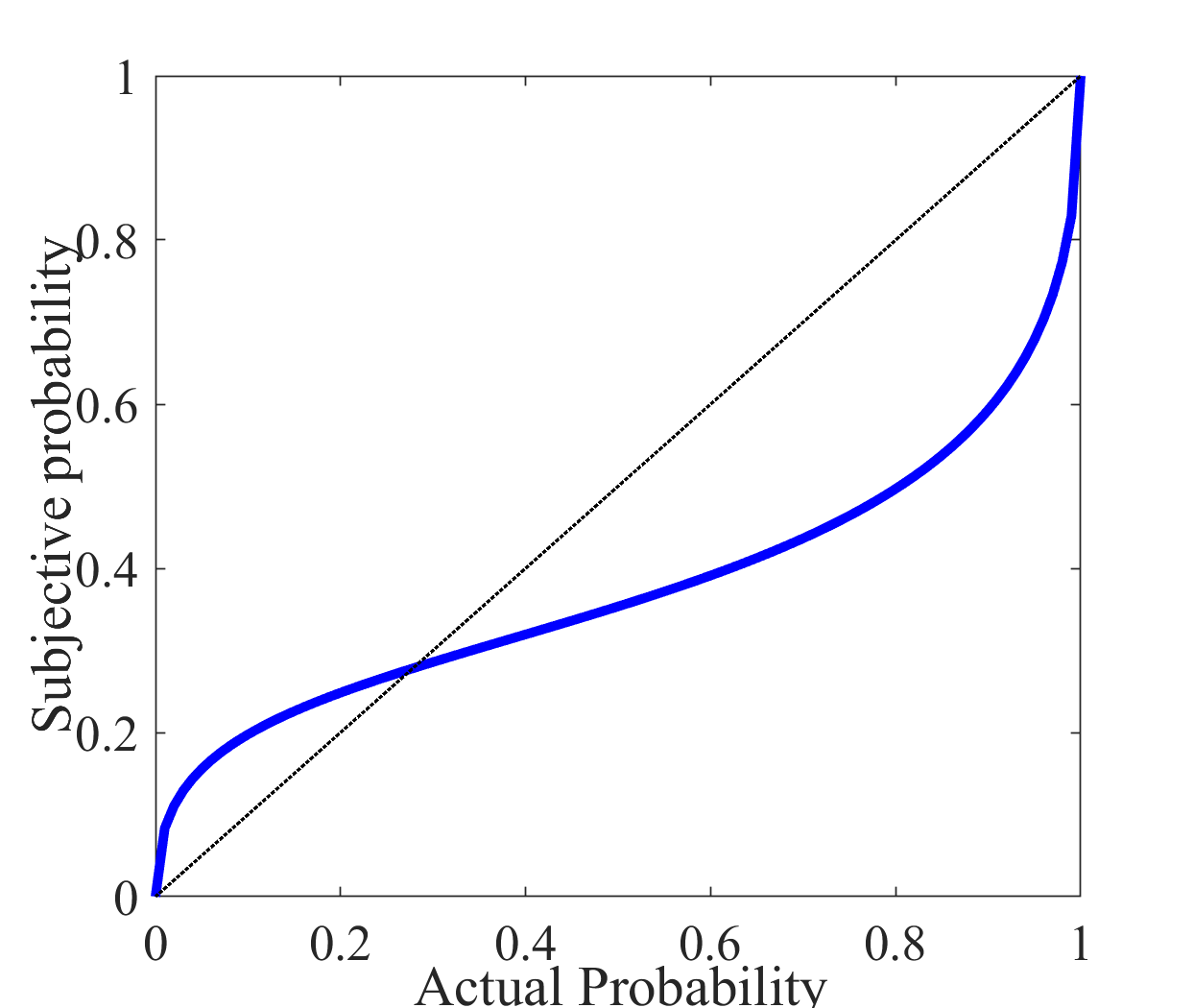}}
  \caption{Illustration of the value function and the weight function in prospect theory}
  \label{fig:subfig} 
\end{figure}

As shown in Fig. 1(b), the probability weighting function shows humans' distorted perception of probability with which an event occurs:

\begin{equation}\label{eq:weight function}
w(x) = \frac{x^{\alpha}}{(x^{\alpha}+(1-x)^{\alpha})^{1/\alpha}}
\end{equation}where $x$ is the actual probability and $\alpha$ is the probability distortion coefficient. A smaller value of $\alpha$ means more severe distortion of the actual probability $x$. For rational decision makers, the PT behavioral parameters $\alpha=\beta=\delta = 1$. In practice, these parameters can be estimated through real experimental data and Tversky et al. showed that the means of $\alpha$, $\beta$ and $\delta$ are $0.69$, $2.25$ and $0.88$, respectively \cite{tversky1992advances}. 

According to the expected utility theory (EUT) \cite{mongin1997expected}, when a decision maker selects an action from a set of alternative choices, the one that results in a higher expected payoff is always preferred. In realistic environments, we employ PT to model the rationality of the crowdworkers and study their strategy in responding to the microtasks.  In our set up, the crowdworker has to decide whether to answer or skip a particular microtask based on the confidence $t$ , i.e., the probability of correctly answering the microtask. Under PT, a behaviorally biased crowdworker makes decisions according to the following theorem to maximize the subjective payoff.

\begin{theorem}
If the confidence of a
crowdworker with behavioral parameters $\alpha$, $\beta$ and $\delta$ towards a question is $t$, then he/she decides to answer or skip a question according to the following rule:
\begin{align}\label{th1}
{\frac{t}{1-t}}\overset{{{answer}}}{\underset{{{skip}}}{\gtrless}}{{\left(\frac{\beta T}{1-T}\right)^{\frac{\delta}{\alpha}}\triangleq \eta. }}\end{align}
\end{theorem}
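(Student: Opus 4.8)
The plan is to treat the worker's choice between \emph{answer} and \emph{skip} as a comparison of two prospects evaluated under the PT value function (\ref{eq:value function}) and weight function (\ref{eq:weight function}), mirroring the rational (EUT) analysis but with distorted utilities and probabilities. First I would fix the reference point at the outcome of skipping the microtask. Holding the responses to the other gold-standard questions fixed, the payment rule (\ref{payment}) is multiplicative, so the marginal effect of the current question is governed by the factor $\alpha_{x_i}$: skipping contributes $\alpha_\lambda = 1$, a correct answer contributes $\alpha_{+1} = 1/T$, and a wrong answer contributes $\alpha_{-1} = 0$. Relative to the skip reference, answering is therefore a gamble that yields a gain proportional to $\tfrac{1}{T} - 1 = \tfrac{1-T}{T}$ with confidence $t$ and a loss proportional to $1$ with probability $1-t$, where the common proportionality constant collects $\kappa$ and the (fixed) product of the remaining factors.

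Next I would write the subjective value of answering as $V_{\mathrm{answer}} = w(t)\,v(g) + w(1-t)\,v(-\ell)$ with $V_{\mathrm{skip}} = v(0) = 0$, where $g$ and $\ell$ are the gain and loss magnitudes identified above. Substituting (\ref{eq:value function}) turns $v(g)$ into $g^\delta$ and $v(-\ell)$ into a negative term carrying the loss-aversion factor and the exponent $\delta$. The worker answers precisely when $V_{\mathrm{answer}} > 0$ (and is indifferent at equality, which covers the boundary case mentioned in the text). Because every term shares the common proportionality constant raised to the power $\delta$, it cancels, leaving an inequality between $w(t)\bigl(\tfrac{1-T}{T}\bigr)^{\delta}$ and the weighted, loss-averse loss term built from $w(1-t)$.

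The crucial algebraic step is the ratio $w(t)/w(1-t)$. Since (\ref{eq:weight function}) gives $w(t)$ and $w(1-t)$ the \emph{same} denominator $\bigl(t^\alpha + (1-t)^\alpha\bigr)^{1/\alpha}$, this ratio collapses cleanly to $\bigl(t/(1-t)\bigr)^{\alpha}$. Isolating the odds $t/(1-t)$ then requires only raising both sides to the power $1/\alpha$, at which point the value-function exponent $\delta$ and the weight exponent $\alpha$ combine into the exponent $\delta/\alpha$, and the loss-aversion and threshold parameters assemble into $\eta = \bigl(\tfrac{\beta T}{1-T}\bigr)^{\delta/\alpha}$, giving (\ref{th1}).

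I expect the main obstacle to be the bookkeeping that makes the threshold come out exactly as $\eta$: one must choose the reference point and the gain/loss magnitudes so that all the multiplicative clutter from (\ref{payment}) cancels, and must track how $\beta$ and $\delta$ enter the loss term so that loss aversion appears inside the base of $\eta$ rather than as a stray multiplicative constant. Verifying the rational sanity check $\alpha=\beta=\delta=1$, which should reduce (\ref{th1}) to the EUT condition $t > T$, is the cleanest way to confirm the accounting is correct.
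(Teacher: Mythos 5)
Your proof takes essentially the same route as the paper's: the paper also sets the reference point at the current expected variable reward $Z$ (i.e., the skip outcome), models answering as the gamble of gaining $Z\left(\frac{1}{T}-1\right)$ with weighted probability $w(t)$ versus losing $Z$ with weighted probability $w(1-t)$, compares the subjective payoff $w(t)\,v\!\left(Z\left(\frac{1}{T}-1\right)\right)+w(1-t)\,v(-Z)$ to zero, and relies on the same collapse $w(t)/w(1-t)=\left(t/(1-t)\right)^{\alpha}$. The one caveat --- shared by the paper, and exactly the bookkeeping issue you flag but do not resolve --- is that with the stated value function $v(-Z)=-\beta Z^{\delta}$ the simplification actually yields the threshold $\beta^{1/\alpha}\left(\frac{T}{1-T}\right)^{\delta/\alpha}$ rather than $\left(\frac{\beta T}{1-T}\right)^{\delta/\alpha}$, so the theorem's $\eta$ implicitly treats the loss as $-(\beta Z)^{\delta}$; neither your sketch nor the paper's ``after simplification'' step closes this gap (the two coincide only when $\delta=1$).
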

\begin{proof}
See Appendix \ref{prospect}
\end{proof}
Note that (\ref{th1}) can be written as:
\begin{align}\label{humanact}
{t}~\overset{{answer}}{\underset{{skip}}{\gtrless}}~{{t^*\triangleq\frac{\eta}{1+\eta}. }}
\end{align}For a rational decision maker with $\alpha=\beta=\delta=1$, $t^*=T$ and the decision rule (\ref{th1}) suggests answering the question if $t>T$, and skip the question otherwise. 
We find that when $\beta T \geq 1-T$, $t^*$ becomes larger, i.e., the worker is more likely to skip the question, as $\beta$, $\delta$ increase and $\alpha$ decreases. Otherwise, $t^*$ becomes smaller as $\delta$ increases and $\alpha,\beta$ decrease.


We consider that all the binary questions are equally difficult and the $w^{\mbox{th}}$ crowd worker has average confidence $t$ of answering a question correctly. Let  $t$ follow the PDF $f^w_t(t)$. According to the behavior characterized in (\ref{humanact}), the probability that the $w^{\mbox{th}}$ crowd worker skips a question can be expressed as $p_w=\int_0^{t^*}f^w_t(t)dt$. The probability that the answer is correct, given the $w^{\mbox{th}}$ worker has submitted a definitive  answer, can be expressed as $r_w=\int_{t^*}^1 f^w_t(t)dt$. Statistically,  $p_w$ determines the number of definitive answers $n$ of the $w^{\mbox{th}}$ worker and $r_w$ determines $\mu$. Recall that the weight assignment scheme in \cite{li2016multi} is $W_w(n) = \mu^{-n}$. The variations of $p_w$ and $r_w$ in the crowd may lead to different weights assigned to the workers. Given the confidence PDFs of all the workers $F_t = \{f_t^1(t),\dots,f_t^W(t)\}$ and  a priori $T$, the classification performance obtained assuming that the workers are rational is not accurate in predicting the system performance in realistic situations, where the crowd workers make decisions under cognitive biases. The first part of our simulations in Section VI provides comparisons between the system performance for a crowd with different behavioral properties. 

\section{Classification in the presence of spammers}
Spammers  are  known to exist in large numbers on crowdsourcing platforms. They submit their answers randomly without being relevant to the question being asked, in the hope of earning some extra money. In this section, we determine the optimal behavior of the spammers that maximizes their expected monetary reward based on the payment mechanism (3). By optimal behavior, we mean the optimal number of questions to be skipped by the spammers. 
We assume that a spammer skips $g$ out of $G$ gold standard questions, and answers the remaining $G-g$ by random guesses.

\subsection{Spammers are Rational}
First, we assume that the spammers behave rationally, i.e., they do not have cognitive biases while calculating the expected payoffs of answering or skipping a question. We hereby define the spammers that skip all the microtasks as \textbf{Type I} spammers, and the spammers that complete all the microtasks as \textbf{Type II} spammers. 
\begin{proposition}\label{op_spammer}
	To maximize the expected monetary reward,  a rational spammer completes all the microtasks (Type II) if $T<\frac 1 2$, and skips all the microtasks (Type I) otherwise.
\end{proposition}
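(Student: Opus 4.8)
The plan is to compute the spammer's expected payoff as a function of the number of gold-standard questions it answers and then exploit the fact that this payoff is monotone in that number, so the optimum is always attained at one of the two extremes.

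First I would observe that a rational spammer ($\alpha=\beta=\delta=1$) has confidence exactly $t=\tfrac12$ on every question, since each submitted answer is an independent random guess on a binary microtask. Hence, for each answered gold-standard question $i$, the reported symbol $x_i$ equals $+1$ with probability $\tfrac12$ and $-1$ with probability $\tfrac12$, so that $\alpha_{x_i}$ takes the value $\alpha_{+1}=\tfrac1T$ or $\alpha_{-1}=0$, each with probability $\tfrac12$, giving $E[\alpha_{x_i}]=\tfrac{1}{2T}$. For each skipped question $\alpha_{x_i}=\alpha_{\lambda}=1$ deterministically.

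Next, using the independence of the responses across questions together with the product form of the payment rule (\ref{payment}), the expected payoff of a spammer that skips $g$ of the $G$ gold-standard questions and answers the remaining $G-g$ is
\[
E[f]=\kappa\,\bigl(\tfrac{1}{2T}\bigr)^{G-g}+\mu_{\min},
\]
since each of the $g$ skipped factors contributes $1$ and each of the $G-g$ answered factors contributes $\tfrac{1}{2T}$ in expectation. Because $\kappa=(\mu_{\max}-\mu_{\min})T^G>0$ and $\mu_{\min}$ is constant, maximizing $E[f]$ over $g\in\{0,1,\dots,G\}$ is equivalent to maximizing $(\tfrac{1}{2T})^{G-g}$, i.e.\ to choosing the number $G-g$ of answered questions so as to maximize this power of $\tfrac{1}{2T}$.

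The final step is an elementary monotonicity argument. If $T<\tfrac12$ then $\tfrac{1}{2T}>1$, so $(\tfrac{1}{2T})^{G-g}$ is strictly increasing in $G-g$ and is maximized by answering every question, $g=0$ (Type II); if $T>\tfrac12$ then $\tfrac{1}{2T}<1$ and the same quantity is strictly decreasing in the number of answered questions, so it is maximized by skipping every question, $g=G$ (Type I), while the borderline case $T=\tfrac12$ renders the payoff independent of $g$. The analysis is short, and the only point requiring a little care is the reduction itself: recognizing that, because every question looks identical to a spammer and the payment is multiplicative across independent responses, the expected reward depends only on the count $G-g$ and not on which particular questions are answered. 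This is precisely what collapses the problem to a one-dimensional monotone optimization whose optimum is all-or-nothing, yielding the stated dichotomy.
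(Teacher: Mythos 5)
Your proof is correct and is essentially the same argument as the paper's: both compute the expected reward of a spammer skipping $g$ questions as $\kappa\left(\tfrac{1}{2T}\right)^{G-g}+\mu_{\min}$ (the paper writes this equivalently as $(\mu_{\max}-\mu_{\min})\left(\tfrac12\right)^G(2T)^g+\mu_{\min}$) and then conclude by monotonicity in $g$ that the optimum is $g=0$ when $T<\tfrac12$ and $g=G$ when $T>\tfrac12$. Your version is slightly more careful in spelling out the independence step that turns the expectation of the product into the product $\left(\tfrac{1}{2T}\right)^{G-g}$, and in noting the indifference at $T=\tfrac12$, but the substance is identical.
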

\begin{proof}
See Appendix \ref{paymentA}.
\end{proof}

The above proposition gives the optimal strategy for the spammers to participate in the crowdsourcing task. Since a spammer can not distinguish the gold standard ones from the other questions, the result derived in Proposition 1 indicates that
the spammers should either complete or skip all the questions according to the value of $T$ to maximize their expected monetary reward. 
In realistic applications, the FC selects $T$ ($T>1/2$) to ensure the high quality of the workers' definitive answers (note that the workers' minimum possible value of the confidence regarding a question is $1/2$, when the answer is a random guess). In this case, according to Proposition 1, all the spammers are Type I and choose to skip all the microtasks.  It is shown later in the paper (Appendix C) that the answers from Type I spammers who respond $\lambda$ to all the microtasks are not aggregated for decision fusion. In other words, the weights assigned to Type I spammers do not affect the aggregation result and in this sense, we can consider all Type I spammers to be honest. As a result, by assuming that spammers behave rationally, all the workers in the crowd can be treated as honest ones and the weight assignment scheme is $W_w(n)=\mu^{-n}$.

\subsection{Spammers are Modeled by PT}
As the spammers are also human decision makers, we employ PT to model their rationality and predict their behavior in completing the micro-tasks. From the result of Theorem 1, we have the following corollary.

\begin{corollary}
To maximize the subjective monetary reward under PT, a spammer with behavioral parameters $\alpha,\beta,\delta$ completes all the microtasks (Type II) if $T>\frac{1}{\beta+1}$, and skips all the tasks (Type I) otherwise.
\end{corollary}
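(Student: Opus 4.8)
The plan is to specialize Theorem 1 to a spammer and then exploit the symmetry of the spammer's situation to obtain an all-or-nothing optimum. The starting observation is that a spammer responds to each binary microtask by a pure random guess, so the probability that any given answer is correct is exactly $t=1/2$, identically for every one of the $G$ gold-standard questions. Since a spammer cannot distinguish the gold-standard questions from the rest and every guess carries the same confidence $1/2$, the per-question decision rule (\ref{humanact}) returns the same verdict on all questions at once. This is precisely what forces the optimal policy to be a corner policy --- either answer every microtask (Type II) or skip every microtask (Type I) --- paralleling the dichotomy established for rational spammers in Proposition~\ref{op_spammer}.

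First I would substitute $t=1/2$ into the threshold form (\ref{humanact}), which pits $t$ against $t^{*}=\eta/(1+\eta)$. The two sides balance exactly when $t^{*}=1/2$, equivalently $\eta=1$. Because $\eta=\left(\frac{\beta T}{1-T}\right)^{\delta/\alpha}$ carries the strictly positive exponent $\delta/\alpha$, the equation $\eta=1$ collapses to $\frac{\beta T}{1-T}=1$, i.e. $(\beta+1)T=1$. Hence the indifference threshold is $T=\frac{1}{\beta+1}$, and, crucially, the diminishing-utility coefficient $\delta$ and the probability-distortion coefficient $\alpha$ cancel through the common positive exponent and do not move this boundary.

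It then remains to fix which side of $T=\frac{1}{\beta+1}$ corresponds to answering. I would settle this by tracking the monotonicity in $T$ of the spammer's subjective payoff for the answer-all policy relative to the skip-all policy, and conclude that completing all microtasks is optimal when $T>\frac{1}{\beta+1}$ while skipping all of them is optimal otherwise, which is exactly the Type II / Type I classification in the statement. A useful by-product to record is that the loss-aversion parameter $\beta$ is the sole behavioral coefficient that relocates the threshold, whereas $\alpha$ and $\delta$ only rescale the perceived magnitudes and leave the critical $T$ untouched.

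The step I expect to be the main obstacle is justifying that the per-question rule of Theorem 1 may legitimately be applied question-by-question despite the multiplicative coupling $\prod_{i=1}^{G}\alpha_{x_i}$ in the payment (\ref{payment}) and despite the nonlinearity that prospect theory injects through the value and weight functions. A single incorrect answer sends $\alpha_{-1}=0$ and annihilates the whole product, so the marginal worth of answering one more question is entangled with the responses to the others, and PT does not respect the expected-value factorization that the rational analysis relied on. I would therefore need to show that, with a uniform confidence $t=1/2$ on every task, the common multiplicative contribution of the remaining $G-1$ questions factors out of both sides of the answer-versus-skip comparison, so that the corner optimum is genuinely attained and the single threshold $T=\frac{1}{\beta+1}$ controls the spammer's entire strategy.
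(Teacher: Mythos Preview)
Your approach matches the paper's: set $t=1/2$ (the spammer's confidence on any question) and invoke Theorem~1, noting that $\alpha,\delta>0$ so the exponent $\delta/\alpha$ is positive and the indifference condition $\eta=1$ collapses to $\frac{\beta T}{1-T}=1$, i.e.\ $T=\frac{1}{\beta+1}$. The paper's own proof is literally this one-line substitution.

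There is, however, a genuine error in your determination of which side corresponds to ``answer.'' Reading directly from (\ref{th1}) with $t=1/2$ gives $\frac{t}{1-t}=1$, so the spammer answers iff $1>\eta$, i.e.\ iff $\frac{\beta T}{1-T}<1$, i.e.\ iff $T<\frac{1}{\beta+1}$. Hence the spammer is Type~II (completes all) when $T<\frac{1}{\beta+1}$ and Type~I (skips all) otherwise --- the \emph{opposite} of what the printed corollary asserts. This direction is also the one consistent with Proposition~\ref{op_spammer}: at $\beta=1$ the threshold becomes $T=1/2$ and the rational spammer completes when $T<1/2$, whereas the corollary as stated would say the rational spammer completes when $T>1/2$, a flat contradiction. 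Your proposed ``monotonicity tracking'' step is unnecessary (the direction drops out immediately from (\ref{th1})) and it led you to the wrong side, apparently in an attempt to match a statement that itself carries a sign slip.

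Your closing concern about whether the per-question rule of Theorem~1 can legitimately be applied across the multiplicatively coupled payment under PT nonlinearity is a fair one, but the paper does not address it either; its proof simply applies Theorem~1 question by question without further justification.
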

\begin{proof}
Since a spammer employs random guesses to respond to the microtasks, $t=1/2$. From the result of Theorem 1 and note $\alpha>0,\delta>0$, the corollary follows.
\end{proof}

Based on the above analysis of the spammers' behavior, we study the optimal weight assignment strategy at the FC and the classification performance in this section. As mentioned earlier, the workers/spammers in crowdsourcing systems have different backgrounds and are heterogeneous in their behavioral parameters $\alpha,\beta,\delta$. Considering that the spammers behave according to Corollary 1, we  need the loss aversion coefficient $\beta$ to predict whether the spammers answer or skip all the questions. Since the spammers remain anonymous in the crowd and certainly do not want to expose themselves, elicitation of parameter $\beta$ for the spammers is not possible. Therefore, without loss of generality, we assume that among the  crowd workers of size $W$, there are a total of $M=M_0+M_N$ spammers, with $M_0$  Type I spammers skipping all the microtasks and $M_N$  Type II spammers completing all the $N$ microtasks. 


The presence of spammers will significantly affect the classification performance of the crowdsourcing system, which may make it even worse when the spammers are starting to act strategically. To ameliorate the spammers' impact on  system performance, we propose the aggregation rule for the FC by maximizing the candidate score assigned to the correct classification class as in (\ref{problem}). We denote our method as Amelioration of Spammers under PT (ASPT).
\begin{proposition}
In a crowd with $M_0$ Type I spammers and $M_N$ Type II spammers, the weight for the $w^{\mbox{th}}$ worker's answer  under formulation (\ref{problem}) is given by
\begin{align}\label{weight_assign}
{W_w(n)} = \left[\left( {W - M} \right){\mu ^n} +
  \frac{{{M_N}}}{{{2^N}{{\left( {1 - m} \right)}^N}}}\delta \left( {n - N} \right)\right]^{-1},
\end{align}
where $n$ represents the number of definitive answers submitted by the $w^{\mbox{th}}$ worker, and $\delta(\cdot)$ is the Dirac delta function.
\end{proposition}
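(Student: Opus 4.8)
The plan is to write the two expected scores in formulation (\ref{problem}) as explicit functions of the weight profile $\{W(n)\}_{n=0}^{N}$, and then solve the equality-constrained problem by the same Lagrange-multiplier argument that produces $W_w=\mu^{-n}$ in the spammer-free case \cite{li2016multi}. First I would decompose the crowd into the $W-M$ honest workers, the $M_0$ Type I spammers and the $M_N$ Type II spammers, and record each group's statistics. An honest worker submits $n$ definitive answers with probability $p(n)=\binom{N}{n}(1-m)^{n}m^{N-n}$; given $n$ definitive answers its decision set $D_w$ has cardinality $2^{N-n}$ and contains the correct class with probability $\mu^{n}$ (all $n$ definitive bits correct). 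A Type II spammer always has $n=N$, so $|D_w|=1$, and since its bits are random guesses the correct class lies in $D_w$ with probability $2^{-N}$. A Type I spammer has $n=0$, $|D_w|=2^{N}$, and adds its weight to every class.

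The key preliminary observation is that the Type I spammers increment the candidate score of every class by the same amount $W(0)$, so they shift all scores uniformly and cannot affect the $\arg\max$ in (\ref{first}); as noted above (and established in Appendix C) they may therefore be excluded from the aggregation. This is exactly why $M_0$ disappears from the final formula, leaving $W-M$ honest workers and $M_N$ Type II spammers as the only contributors and accounting for the coefficient $(W-M)$ in (\ref{weight_assign}). With Type I removed, the expected correct-class score becomes $E_C[\mathbb{W}]=\sum_{n=0}^{N}q(n)\,W(n)$, where $q(n)=(W-M)p(n)\mu^{n}+M_N 2^{-N}\delta(n-N)$ is the expected number of workers submitting $n$ definitive answers whose $D_w$ contains the correct class, while $E_O[\mathbb{W}]$ collects the analogous total-score contributions $(W-M)p(n)2^{N-n}$ and, at $n=N$, the term $M_N$.

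Next I would form the Lagrangian $\mathcal{L}=E_C[\mathbb{W}]-\lambda\big(E_O[\mathbb{W}]-K\big)$ and impose the stationarity conditions $\partial\mathcal{L}/\partial W(n)=0$ for each level $n$, exactly as in the derivation of $\mu^{-n}$. Carrying out this argument renders $W(n)$ inversely proportional to the correct-class mass at level $n$ after normalization by the occurrence probability $p(n)$, i.e. $W(n)\propto p(n)/q(n)$. For $n<N$ only honest workers contribute, the binomial factor $p(n)$ cancels, and one recovers $W(n)\propto\mu^{-n}$; at $n=N$ the Type II term survives and, after dividing numerator and denominator by $p(N)=(1-m)^{N}$, contributes the correction $M_N/[2^{N}(1-m)^{N}]$. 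Absorbing the proportionality constant into the budget $K$ then yields precisely (\ref{weight_assign}).

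The main obstacle is the bookkeeping at the coincident level $n=N$, where honest workers who answer every question and Type II spammers are statistically indistinguishable to the FC and hence share the single weight $W(N)$: I must combine their heterogeneous correct-class probabilities ($\mu^{N}$ versus $2^{-N}$) correctly and show that the factor $(1-m)^{N}=p(N)$ arising from the honest occurrence probability is exactly what converts the raw Type II contribution $M_N 2^{-N}$ into the stated $M_N/[2^{N}(1-m)^{N}]$, and that this discrete mass is faithfully represented by the Dirac delta $\delta(n-N)$. I would verify the algebra with two sanity checks: setting $M=M_N=0$ must return $W(n)\propto\mu^{-n}$, and the limit $M_N\to 0$ must remove the correction from $W(N)$.
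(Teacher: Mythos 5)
Your setup is faithful to the paper's: the group statistics, the expected correct-class score $E_C[\mathbb{W}]=\sum_{n}W(n)q(n)$ with $q(n)=(W-M)p(n)\mu^{n}+M_N2^{-N}\delta(n-N)$, and the elimination of Type I spammers are all right. (The paper removes them \emph{after} the optimization rather than before: it carries a term $\frac{M_0}{m^N}\delta(n)$ through the derivation and then drops it by noting that a worker with $n=0$ contributes $T_w=0$ to the bit-wise fusion statistic; your uniform-shift-of-all-scores argument is equivalent.) Your target identity $W(n)\propto p(n)/q(n)$ is also exactly the paper's answer, since $q(n)/p(n)$ equals the paper's $S(n)=(W-M)\mu^{n}+\frac{M_N}{2^N(1-m)^N}\delta(n-N)$.

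The genuine gap is the optimization step. With your definitions, both $E_C[\mathbb{W}]=\sum_n q(n)W(n)$ and $E_O[\mathbb{W}]=\sum_n o(n)W(n)$, where $o(n)=(W-M)p(n)2^{N-n}+M_N\delta(n-N)$, are \emph{linear} in the weight profile, so the Lagrangian $\mathcal{L}=E_C-\lambda\left(E_O-K\right)$ has stationarity conditions $q(n)=\lambda\, o(n)$ for every $n$ simultaneously. This system is overdetermined and generically inconsistent: for $n<N$ it forces $\lambda=\mu^{n}/2^{N-n}=(2\mu)^{n}/2^{N}$, which varies with $n$ unless $\mu=\tfrac12$. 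A linear objective on an affine constraint set has no interior stationary point; its maximum (under nonnegative weights) is attained at a vertex, i.e., all budget concentrated on the single level $n$ with the best ratio $q(n)/o(n)$ --- which is not (\ref{weight_assign}). So ``carrying out this argument'' cannot render $W(n)\propto p(n)/q(n)$; that conclusion is asserted, not derived. The mechanism the paper actually uses (Appendix C) is different: writing $E_C=\sum_n \left[W(n)S(n)\right]\mathbb{P}(n)$ and using $\sum_n\mathbb{P}(n)=1$, it applies the Cauchy--Schwarz inequality $E_C\le\sqrt{\sum_n\left(W(n)S(n)\right)^2\mathbb{P}(n)}$, whose equality condition is that $W(n)S(n)$ be \emph{constant in $n$}; under the normalization that fixes the right-hand side, the maximum is attained exactly at this equalization $W(n)S(n)\equiv\alpha$, giving $W(n)=\alpha/S(n)=\alpha\,p(n)/q(n)$ and hence (\ref{weight_assign}). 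To repair your proof you need this equalization condition (equivalently, a nonlinear reading of the constraint $E_O=K$ that makes the Cauchy--Schwarz bound tight); the plain Lagrange-multiplier argument on two linear functionals fails at precisely the point where the formula is supposed to emerge.
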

\begin{proof}
See Appendix \ref{weightassignment2}.
\end{proof}

{Since the answers from Type I spammers who respond $\lambda$ to all the microtasks are not aggregated for decision fusion,  $M_0$ does not appear in the expression (\ref{weight_assign}).  In our scheme, the honest workers whose answers are not all definitive employ weights equal to $W_w(n)= \frac{1}{(W-M)\mu^n}$. This is the same as the weight assignment $W_w(n) = \mu^{-n}$ developed in \cite{li2016multi}, where all the workers are assumed to be honest\footnote{$W-M$ is a constant representing the number of honest workers in our scheme. Here, this constant acts as a scaling parameter.}. For workers who submit all definitive answers, the weight is decreased to a smaller value by adding $\frac{M_N}{2^N (1-m)^N}$ to the denominator of $\frac{1}{(W-M)\mu^n}$.  The weight assigned to a worker with all definitive answers, namely, $W_w(N)$ can not be large because it is likely that this worker is a spammer. On the other hand, since it is possible that this worker is honest, $W_w(N)$ can not be too small. Essentially maximization of the candidate score for the correct classification class gives the optimal value of $W_w(N)$, leading to the expression in (\ref{weight_assign}). The larger $M_N$ is, the more likely that a worker with all definitive answers is a spammer. Correspondingly, $W_w(N)$ is smaller.}

\section{Parameter estimation and performance analysis of ASPT}
In this section, we present the  parameter estimation technique used in our proposed method. Classification performance and asymptotic performance will also be examined.

\subsection{Parameter Estimation}

\begin{figure*}[!ht]
	\normalsize
	\begin{align}\label{mle}
f(W_{N+G},W_0|M_N,M_0      )=&\binom{W-M_0-M_N}{W_0-M_0}(\hat m^{N+G})^{W_0-M_0}(1-\hat m^{N+G})^{W-W_0-M_N}\nonumber\\
&\cdot \binom{W-W_0-M_N}{W_{N+G}-M_N}(1-\hat m)^{(N+G)(W_{N+G}-M_N)}\left(1-(1-\hat m)^{N+G}\right)^{W-W_{N+G}-W_0}
	\end{align}
		\hrulefill
\end{figure*}

The FC needs to estimate the crowd parameters $\mu$, $m$,  $M_N$, $M_0$ before assigning weights to the workers. Following our previous work \cite{li2016multi}, either the \textit{Training-based} or the \textit{Majority-voting based} method is adopted to estimate $\mu$. The estimate of $m$ is given by the ratio of the sum of skipped questions and all the questions attempted by the crowd. Since $m$ and $\mu$ represent statistical parameters for the honest workers in the crowd, the workers completing or skipping all the questions are not incorporated in the parameter estimation procedure to mitigate the impacts of spammers. The number of Type I and II spammers $M_0$ and $M_N$ are jointly estimated using the maximum likelihood estimation (MLE) method. $G$ gold standard questions are inserted into the $N$ classification questions, so that a worker responds to a total of $N+G$ questions. After answers from all the workers are collected by the FC,  we  count the number of workers submitting $N+G$ definitive answers and skipping all the microtasks, denoted be $W_{N+G}$ and $W_0$, respectively. Given the numbers of Type I and II spammers $M_0$ and $M_N$, the joint PDF of $W_{N+G}$ and $W_0$, $f(W_{N+G},W_0|M_N,M_0      )$,  is expressed in \eqref{mle}, where $\hat m$ is the estimated $m$, and $\binom{a}{b}=\frac{a!}{(a-b)!b!}$. 

Therefore, by the MLE method, the estimates of $M_0$ and $M_N$, which are denoted by $\hat M_0$ and $\hat M_N$ respectively, can be obtained as
\begin{align}\label{likelihood}
\left\{ {\hat  M_N,\hat M_0 } \right\}=\arg \mathop {\max }\limits_{\left\{{  M_N, M_0 }  \right\} \ge 0 }f(W_{N+G},W_0|M_N,M_0      ) .
\end{align}By writing $\mathbb{W}=W_0,\dots,W_i,\dots, W_{N+G}$ where $W_i$ is the number of workers submitting $i$ definitive answers for $i=0,\dots,N+G$, $M_N$ and $M_0$ can be more accurately estimated according to 
\begin{align}
\left\{ {\hat  M_N,\hat M_0 } \right\}=\arg \mathop {\max }\limits_{\left\{{  M_N, M_0 }  \right\} \ge 0 }f(\mathbb{W}|M_N,M_0      ) .
\end{align}However, the likelihood function $f(\mathbb{W}|M_N,M_0)$ becomes very complicated to compute. As we will see later in the simulation results section, the approach using (\ref{likelihood}) is sufficient to get a relatively accurate estimate of  $M_N$ and $M_0$. After the estimates of $\hat \mu$, $\hat{m}$, ${\hat M_N}$, and ${\hat M_0}$ are obtained, the FC can assign the appropriate weight to each worker based on (\ref{weight_assign}) and use the answers for aggregation.

\subsection{Performance Analysis}
In this section, we assume that there are $M_0$ Type I spammers and $M_N$ Type II spammers in a crowd population of size $W$. The spammers attempt to maximize their monetary rewards under the PT model as presented in Section IV. The probability of correct classification $P_c$ is investigated for the weight assignment scheme (\ref{weight_assign}). For simplicity, we assume that the prior probabilities  of the  true answers for each microtask to be “0” or “1” are equal. Note that we have a  correct overall classification only when all the $N$ microtasks are correctly labeled.

\begin{proposition}
	The probability of correct classification $P_c$ in the crowdsourcing system is 
	\begin{align}
	{P_c} =\Big[ \frac{1}{2} &+ \frac{1}{2}\sum\limits_{S} {\binom{W,M}{\mathbb{G}}} \left( {F(\mathbb{G}) - F^{\prime}(\mathbb{G})} \right) \nonumber\\&+\frac{1}{4}\sum\limits_{S^\prime} {\binom{W,M}{\mathbb{G}}} \left( {F(\mathbb{G}) - F^{\prime}(\mathbb{G})} \right)\Big] ^N
	\end{align}
	with
	\begin{align}
	&F({{\mathbb G}}) \nonumber=\\& m^{q_0}{\left(\frac{1}{2}\right)}^{M_N}\!\!\prod\limits_{n = 1}^N {{{\left( {1\! -\! \mu } \right)}^{{q_{\! - \!n}}}}{\mu ^{{q_n}}}{{\left(\! {\binom{N\!-\!1}{n\!-\!1}{{\left( {1 \!-\! m} \!\right)}^n}{m^{N - n}}} \right)}^{{q_{ \!-\! n}} \!+\! {q_n}}}} 
	\end{align}
	and
	\begin{align}
	&F^{\prime}({\mathbb G})=\nonumber\\&  m^{q_0}{\left(\frac{1}{2}\right)}^{M_N}\!\!\prod\limits_{n = 1}^N {{{\left( {1 \!-\! \mu } \right)}^{{q_n}}}{\mu ^{{q_{ \!-\! n}}}}{{\left(\! {\binom{N\!-\!1}{n\!-\!1}{{\left( {1 \!-\! m} \right)}^n}{m^{N - n}}}\! \right)}^{{q_{ \!-\! n}}\! +\! {q_n}}}} 
	\end{align}where
\begin{align}
	{{\mathbb G}} = \{ &({q_{ - N}},{q_{ - N + 1}}, \ldots {q_N},M_N^{\prime},M_N^{\prime \prime}):\nonumber\\
		&\sum\limits_{n =  - N}^N {{q_n} = W - M_N -M_0} ,M_N^{\prime}+M_N^{\prime \prime}=M_N \},
	\end{align} and $q_n$, $M_N^{\prime}$, and $M_N^{\prime \prime}$ take values from  natural numbers $\{0,1,\dots\}$,
	
	\begin{small}
	\begin{align}
	{S} \!=\! \left\{ {{{\mathbb G}}\!:\!\sum\limits_{n = 1}^N {({q_n} \!-\! {q_{ - n}})W_w(n)}  \!+\!(M_N^{\prime}\!-\!M_N^{\prime \prime})W_w(N) }>0 \right\},
	\end{align}
	\begin{align}
	{S}^\prime \! =\! \left\{ {{{\mathbb G}}\!:\!\sum\limits_{n = 1}^N {({q_n} \!-\! {q_{ - n}})W_w(n)}  \!+\!(M_N^{\prime}\!-\!M_N^{\prime \prime})W_w(N)  }=0 \right\},
	\end{align}
	\end{small}and $\binom{W,M}{\mathbb{G}} = \frac{{(W-M_0)!}}{M_N^{\prime}!M_N^{\prime \prime}!{\prod_{n =  - N}^N {{q_n}!} }}$.
\end{proposition}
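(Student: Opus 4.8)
The plan is to reduce the $M$-ary classification to $N$ per-bit weighted-majority decisions and then compute the probability that a single bit is decided correctly; the exponent $[\,\cdot\,]^N$ in the statement reflects that the $N$ binary questions are of equal difficulty and answered independently, so that an overall correct classification occurs exactly when every bit is correct. First I would fix one bit (say the $i$-th) and describe the net weighted vote it receives. Under the fusion rule (1), an honest worker that returns a definitive answer on bit $i$ contributes its weight $W_w(n)$ to the correct or to the incorrect value of that bit according to whether its answer is right or wrong, where $n$ is the worker's \emph{total} number of definitive answers (so the weight couples bit $i$ to the rest); a worker that skips bit $i$ contributes nothing to that bit's decision, each Type II spammer contributes $W_w(N)$ to whichever value its random guess selected, and each Type I spammer (skipping everything) drops out, which is why $M_0$ only enters through the headcount $W-M_0$. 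Thus the outcome on bit $i$ is governed by the sign of
\[
\Delta(\mathbb{G}) = \sum_{n=1}^N (q_n - q_{-n})\,W_w(n) + (M_N^{\prime} - M_N^{\prime\prime})\,W_w(N),
\]
which is precisely the quantity whose strict positivity and vanishing define the index sets $S$ and $S^{\prime}$.

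Next I would build the distribution of the configuration $\mathbb{G}=(q_{-N},\dots,q_N,M_N^{\prime},M_N^{\prime\prime})$. The key computation is the per-worker marginal on bit $i$: since each question is skipped independently with probability $m$, the probability that an honest worker gives a definitive answer on bit $i$ \emph{and} has $n$ definitive answers in total is $\binom{N-1}{n-1}(1-m)^n m^{N-n}$; multiplying by $\mu$ (resp.\ $1-\mu$) for a correct (resp.\ incorrect) answer gives the category probabilities populating $q_n$ (resp.\ $q_{-n}$), while skipping bit $i$ occurs with probability $m$ and populates $q_0$. A short binomial-sum check shows these add to one, which both validates the bookkeeping and explains the exponent $q_{-n}+q_n$ on the binomial factor inside $F(\mathbb{G})$. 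Assembling the honest-worker multinomial over the $W-M$ honest workers with the $\tfrac12$--$\tfrac12$ binomial for the $M_N$ Type II spammers produces the combinatorial weight $\binom{W,M}{\mathbb{G}}$ together with the probability kernel $F(\mathbb{G})$, where the mirror kernel $F^{\prime}(\mathbb{G})$ is the same expression with $\mu$ and $1-\mu$ interchanged.

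The final step is a symmetry argument that produces the stated closed form. Because the per-bit prior is uniform, the mirror map $\mathbb{G}\mapsto\bar{\mathbb{G}}$ swapping $q_n\leftrightarrow q_{-n}$ and $M_N^{\prime}\leftrightarrow M_N^{\prime\prime}$ preserves $\binom{W,M}{\mathbb{G}}$, sends $\Delta$ to $-\Delta$, and satisfies $F(\bar{\mathbb{G}})=F^{\prime}(\mathbb{G})$; in particular $S^{\prime}$ is invariant under the map and, with the configuration probabilities normalized to sum to one, $\sum F=\sum F^{\prime}=1$ over all configurations. I would start from $P_c^{\mathrm{bit}}=\sum_{S}\binom{W,M}{\mathbb{G}}F(\mathbb{G})+\tfrac12\sum_{S^{\prime}}\binom{W,M}{\mathbb{G}}F(\mathbb{G})$ (strict win plus half of the ties) and substitute $F=\tfrac12(F+F^{\prime})+\tfrac12(F-F^{\prime})$. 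The symmetric part $\tfrac12(F+F^{\prime})$ collapses, via the mirror involution and the normalization, to exactly $\tfrac12$, while the antisymmetric part leaves precisely $\tfrac12\sum_{S}\binom{W,M}{\mathbb{G}}(F-F^{\prime})$ from the strict-win set and $\tfrac14\sum_{S^{\prime}}\binom{W,M}{\mathbb{G}}(F-F^{\prime})$ from the tie set. This is the bracketed single-bit probability; raising it to the $N$-th power yields $P_c$. As a consistency remark, the last term is in fact identically zero, since the mirror map acts on the tie set $S^{\prime}$ so that $\sum_{S^{\prime}}\binom{W,M}{\mathbb{G}}F=\sum_{S^{\prime}}\binom{W,M}{\mathbb{G}}F^{\prime}$.

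I expect the main obstacle to be the reduction step rather than the algebra: the weight $W_w(n)$ couples all $N$ bits through the shared count $n$, so the per-bit decisions are not literally independent, and one must argue (or accept, as the equal-difficulty and independent-question model invites) that the product form is the intended aggregate. The device that makes the single-bit probability self-contained is marginalizing $n$ correctly through the $\binom{N-1}{n-1}(1-m)^n m^{N-n}$ factor, and the most delicate piece of bookkeeping is verifying that the two distinct pools—the $W-M$ honest workers and the $M_N$ Type II spammers—assemble consistently into the single coefficient $\binom{W,M}{\mathbb{G}}$ and kernel $F(\mathbb{G})$.
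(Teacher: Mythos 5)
Your proposal is correct and follows essentially the same route as the paper's proof: reduce to the per-bit weighted-vote statistic whose sign defines $S$ and $S^{\prime}$, count answer profiles $\mathbb{G}$ with the multinomial coefficient $\binom{W,M}{\mathbb{G}}$ and the kernels $F$, $F^{\prime}$ (the paper obtains your per-worker marginal $\binom{N-1}{n-1}(1-m)^n m^{N-n}$ by averaging its quantity $\varphi_n(w)$), and then raise the per-bit probability to the $N$-th power, accepting the same bit-independence idealization the paper also glosses over. The only cosmetic difference is the final averaging: the paper writes $P_{c,i}=\frac{1+P_{d,i}-P_{f,i}}{2}$ with the detection and false-alarm probabilities computed from $F$ and $F^{\prime}$ respectively, whereas you reach the identical expression via the mirror-involution symmetry (your added observation that the tie term vanishes is consistent with the paper, which simply leaves that term in place).
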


\begin{IEEEproof}
	See Appendix \ref{pf2}.
\end{IEEEproof}

\subsection{Asymptotic Performance Analysis}

	\begin{figure*}[!ht]

\begin{align}\label{MandV}
\mathcal{M}= \frac{\left(2\mu-1\right)\left(1-m\right)}{\mu} \left(\frac{1-m}{\mu}+m\right)^{N-1}\nonumber+\frac{(W-M)\left(2\mu-1\right)\left(1-m\right)^NZ_M}{\left(W-M\right)\mu^N Z_M +M_N}
\\
\mathcal{V}=\frac{1-m}{\left(W-M\right)\mu^2}\left(\frac{1-m}{\mu^2}+m\right)^{N-1}+\frac{\left((W-M)(1-m)^N+M_N\right) Z_M^2}{\left((W-M)\mu^N Z_M+M_N \right)^2}-\frac{\mathcal{M}^2}{W-M}.
\end{align}
\hrulefill
\end{figure*}
In a practical situation, the number of workers for the crowdsourcing task is relatively large (normally in the hundreds). Then, it is of great value to investigate the asymptotic system performance when $W$ approaches infinity. Here, we give the asymptotic performance characterization for a large crowd, i.e., for a large $W$.

\begin{proposition}
As the number of workers $W$ approaches infinity, the probability of correct classification $P_c$ can be expressed as
	\begin{align}\label{pc}
	P_c=\left[Q\left( { - \frac{{{\mathcal{M}{{}}}}}{{\sqrt {{\mathcal{V}{{}}}} }}} \right)\right]^N,
	\end{align}
	where $Q(x) = \frac{1}{{\sqrt {2\pi } }}\int_x^{\infty}  {e^{\frac{{ - t^2 }}{2}} dt}$, and $\mathcal{M}$ and $\mathcal{V}$ are given in (\ref{MandV})
with $Z_M={2^N\left(1-m\right)^N}$.
\end{proposition}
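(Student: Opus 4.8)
The plan is to reduce the $N$-ary classification to $N$ independent binary decisions and to invoke a central limit theorem (CLT) on the weighted fusion statistic of each bit. Since a correct overall classification requires all $N$ bits to be labeled correctly and the binary questions are modeled as independent, it suffices to compute the per-bit probability of a correct decision, call it $P_c^{\text{bit}}$, and then set $P_c = (P_c^{\text{bit}})^N$. For a fixed bit I would write the fusion statistic as $\Lambda = \sum_{w} W_w(n_w) Y_w$, where $Y_w = +1$ if worker $w$ answers that bit correctly, $Y_w = -1$ if incorrectly, and $Y_w = 0$ if the bit is skipped, and $n_w$ is the worker's total number of definitive answers. With equal priors the bit is decided correctly precisely when $\Lambda > 0$, ties ($\Lambda = 0$) being resolved by a fair coin; in the Gaussian limit the tie event has vanishing probability, so $P_c^{\text{bit}} = \Pr(\Lambda > 0)$.

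Next I would split $\Lambda$ by worker type. Type I spammers respond $\lambda$ everywhere, so $Y_w \equiv 0$ and they drop out. The $W-M$ honest workers are i.i.d.\ and the $M_N$ Type II spammers are i.i.d., so $\Lambda$ is a sum of independent contributions and both $\mathcal{M} = E[\Lambda]$ and $\mathcal{V} = \mathrm{Var}(\Lambda)$ add across workers. For one honest worker I would condition on the event that the bit in question is answered (probability $1-m$) and on the number $k$ of the remaining $N-1$ bits that are answered, so $n_w = k+1$; summing $\binom{N-1}{k}(1-m)^k m^{N-1-k}$ against $(2\mu-1)\,W_w(k+1)$ for the mean and against $W_w(k+1)^2$ for the second moment produces binomial closed forms such as $(\tfrac{1-m}{\mu}+m)^{N-1}$ and $(\tfrac{1-m}{\mu^2}+m)^{N-1}$. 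The key subtlety is the Dirac term in \eqref{weight_assign}: the case $k=N-1$ (all $N$ bits definitive) carries the reduced weight $W_w(N) = Z_M/\big((W-M)\mu^N Z_M + M_N\big)$ instead of $1/\big((W-M)\mu^N\big)$, so I would isolate that term, which is exactly what produces the second summand of $\mathcal{M}$ and the second summand of $\mathcal{V}$ in \eqref{MandV}. Each Type II spammer answers every bit by a fair guess, contributing zero mean and variance $W_w(N)^2$; summing $M_N$ of these gives the spammer part of $\mathcal{V}$, while the term $-\mathcal{M}^2/(W-M)$ is simply the $-(E[X])^2$ correction accumulated over the $W-M$ i.i.d.\ honest workers (the spammers having zero mean, $\mathcal{M}$ is entirely honest-generated).

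With $\mathcal{M}$ and $\mathcal{V}$ identified, I would establish that $\Lambda$ is asymptotically Gaussian. Because each weight scales as $1/(W-M)$, the summands form a triangular array with $E[\Lambda] = O(1)$ and $\mathrm{Var}(\Lambda) = O(1/W)$; the individual contributions are uniformly bounded, so $\max_w \mathrm{Var}(X_w)/\mathcal{V} \to 0$ and the Lindeberg--Feller condition is satisfied. Hence $(\Lambda - \mathcal{M})/\sqrt{\mathcal{V}}$ converges to a standard normal, giving $\Pr(\Lambda > 0) = \Pr\!\big((\Lambda-\mathcal{M})/\sqrt{\mathcal{V}} > -\mathcal{M}/\sqrt{\mathcal{V}}\big) \to Q(-\mathcal{M}/\sqrt{\mathcal{V}})$, and raising to the $N$th power yields \eqref{pc}.

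The main obstacle is the careful bookkeeping in the moment computation rather than the limit itself: one must correctly peel off the $n=N$ contribution carried by the Dirac term in \eqref{weight_assign}, keep the honest and Type~II populations separate (the spammers enter $\mathcal{V}$ but not $\mathcal{M}$), and reassemble the $-(E[X])^2$ cross-term so that the three pieces match \eqref{MandV} exactly. A secondary point worth verifying is that the per-bit decisions may be treated as independent across the $N$ bits in the large-$W$ regime, so that the factorization $P_c = (P_c^{\text{bit}})^N$ is legitimate even though a single worker's weight $W_w(n_w)$ couples its answers across bits.
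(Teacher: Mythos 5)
Your proposal follows essentially the same route as the paper's proof: decompose into $N$ per-bit fusion statistics, compute the mean and variance of each worker's weighted contribution separately for honest workers and Type II spammers (with Type I spammers dropping out and the $n=N$ Dirac term in the weight isolated), invoke the CLT to get $Q(-\mathcal{M}/\sqrt{\mathcal{V}})$ per bit, and raise to the $N$th power. Your additional care with the Lindeberg--Feller condition and the vanishing tie event is a minor refinement of the paper's bare appeal to the Central Limit Theorem, not a different approach.
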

\begin{IEEEproof}
See Appendix \ref{asym}.
\end{IEEEproof}

As stated above, the size of the crowd in practice can be fairly large and the asymptotic result derived in \eqref{pc} is a good characterization of the actual performance. We further consider that the percentages of Type I and Type II spammers in the crowd are $\gamma$ and $\epsilon$ respectively and give the analysis for the following two cases:
\begin{itemize}
    \item Case 1: $\lim\limits_{W\rightarrow \infty} \frac {M_O} W =\gamma >0$, $\lim\limits_{W\rightarrow \infty} \frac {M_N} W=\epsilon =0$. In this situation, we have $\mathcal{M}= \frac{\left(2\mu-1\right)\left(1-m\right)}{\mu} \left(\frac{1-m}{\mu}+m\right)^{N-1}\nonumber+\frac{\left(2\mu-1\right)\left(1-m\right)^N}{\mu^N }$, which is a constant given $m$, $\mu$ and $N$. $\mathcal{V}=\frac{1}{W(1-\gamma)}\left(\frac{1-m}{\mu^2}(\frac{1-m}{\mu^2}+m)^{N-1}+\frac{(1-m)^N}{\mu^N }-\mathcal{M}^2 \right)$. Note that $W(1-\gamma)$ represents the number of honest workers in the crowd. As there are more honest workers in the crowd, $\mathcal{V}$ becomes smaller and the probability of correct classification becomes larger. The expressions of $\mathcal{M}$ and $\mathcal{V}$ analytically show that as long as the number of honest workers is fixed, the number of Type I spammers have no impact on the system performance.
    
\item Case 2: $\lim\limits_{W\rightarrow \infty} \frac {M_O} W =\gamma=0$, $\lim\limits_{W\rightarrow \infty} \frac {M_N} W =\epsilon>0$. We have $\mathcal{M}= \frac{\left(2\mu-1\right)\left(1-m\right)}{\mu} \left(\frac{1-m}{\mu}+m\right)^{N-1}\nonumber+\frac{\left(2\mu-1\right)\left(1-m\right)^NZ_M}{\mu^N Z_M + \frac{\epsilon}{1-\epsilon} }$. As the percentage of Type II spammers in the crowd $\epsilon$ increases, $\mathcal{M}$ becomes smaller and the classification performance in terms of $P_c$ deteriorates. In this scenario, it is not easy to show the monotonicity of $\mathcal{V}$ with respect to $\epsilon$ and we rely on simulations to show that the probability of correct classification decreases as $\epsilon$ is larger.
\end{itemize}

\section{Simulation Results}

In the first part of this section, we consider that there are no spammers in the crowd.  Simulations are provided to illustrate how PT affect the workers' behavior and system performance. 

\subsection{Crowdsourcing Without Spammers}
We plot the actual confidence thresholds of cognitively biased crowdworkers $t^*$ with respect to the pre-designed threshold $T$ in Fig. 2 for humans with different behavioral properties. It can be observed that $t^*$ becomes larger as $T$ increases. Since we restrict that $T\geq 0.5$, $\beta T>(1-T)$ is satisfied. As a result, in Fig. 2(a) we see that $t^*$ becomes larger as $\beta$ increases and in Fig. 2(b) we see that $t^*$ becomes smaller when $\alpha$ increases. Note that in the green curve in the upper subplot, $\alpha= 0.69$, $\beta=2.25$ and $\delta= 0.88$ are the mean values of behavioral parameters of the humans from the experiment in \cite{tversky1992advances}. Hence, for this group of population, the green curve represents the average $t^*$ employed by the nominal cognitively biased workers. Since $t^*>T$, we can see that the workers are more likely to use the skip option in practice.

\begin{figure}[h]
	\centering
	\includegraphics[width=3.3in]{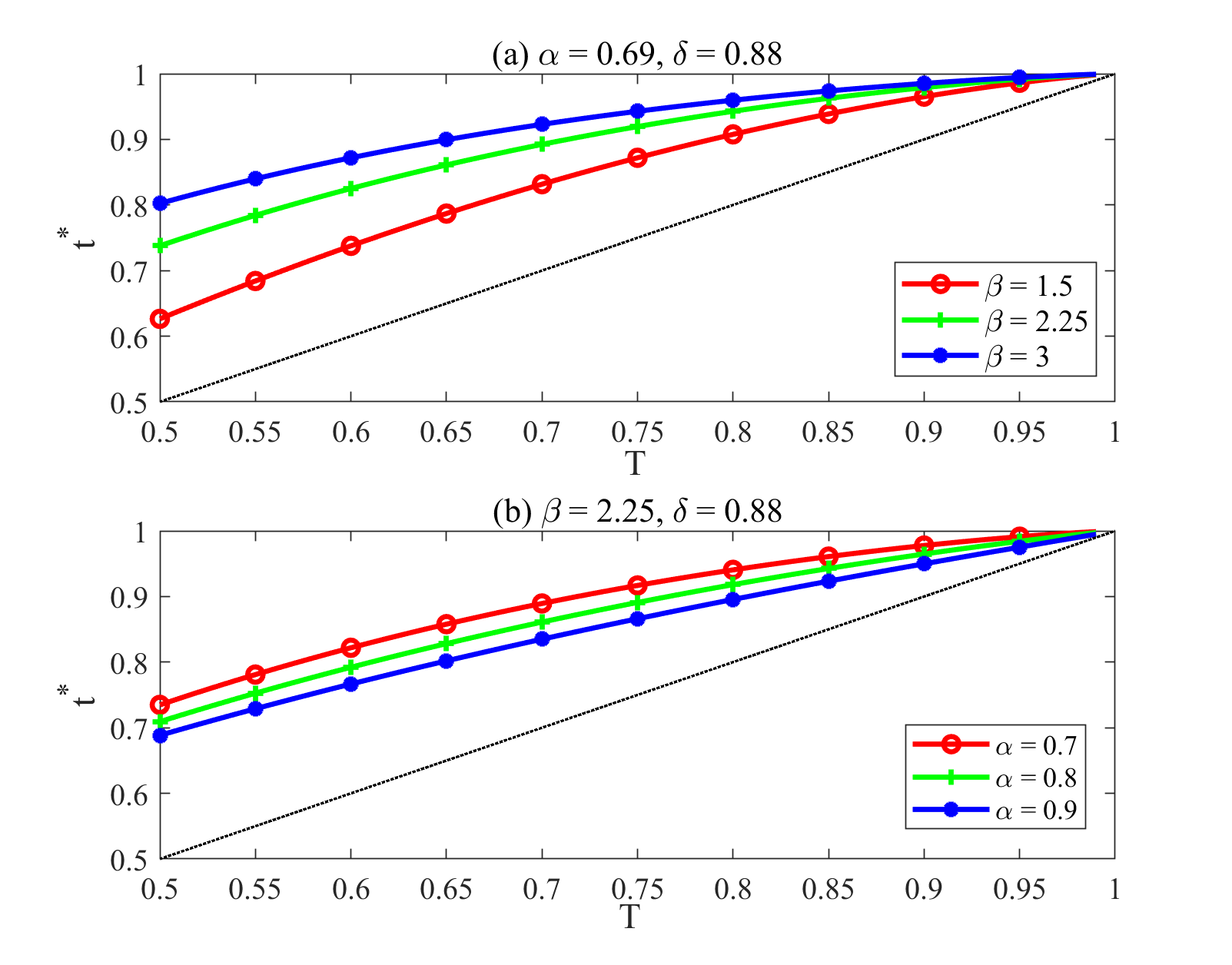} 
	\caption{Confidence thresholds of cognitively biased crowdworkers based on PT}
	\label{final1}
\end{figure}

Next, we assume that the confidence $t$ of the crowd workers follows a uniform distribution $U(0.5,x)$ where $x$ is uniformly distributed in $[0.7,0.9]$. The size of the crowd is $W=30$ and the confidence threshold $T$ is set equal to $0.6$. There are $N=3$ microtasks and $G=3$ gold standard questions. In Fig. \ref{final2} , we plot the system performance in terms of probability of correct classification $P_c$ for crowds with different behavioral parameters. If the workers in the crowd are assumed to be rational, i.e., $\alpha=\beta=\delta=1$, we obtain that $P_c = 0.8445$ and as the behavioral parameters change, $P_c$ has different values. Basically, given the distribution of the workers' confidence $t$, different behavioral parameters lead to different confidence threshold $t^*$, which in turn causes variations of the statistical parameters $m$ and $\mu$ of the crowd, leading to different classification performances. In the upper subplot where the probability distortion factor $\alpha=0.68$, $P_c$ first increases and then decreases as the loss aversion parameter $\beta$ becomes larger. Besides, $P_c$ is higher as the diminishing marginal utility parameter $\delta$ has a smaller value. In this case, appropriate values of $\beta$ and $\delta$ counter
the probability distortion factor $\alpha$ and improve the system performance. The behavioral parameters $\alpha, \beta$ and $\delta$ jointly determine the overall probability of correct classification $P_c$. Moreover, we can observe that the best achievable $P_c$ in this subplot is higher than $P_c=0.8445$ when the crowd workers are assumed to be rational. In the lower subplot where $\beta=2.25$, $P_c$ monotonically decreases as  $\alpha$ decreases from $1$ to $0.5$.  Same to the upper subplot, we have $P_c$ become larger as $\delta$ has a smaller value. 

\begin{figure}[ht]
	\centering
	\includegraphics[width=3.3in]{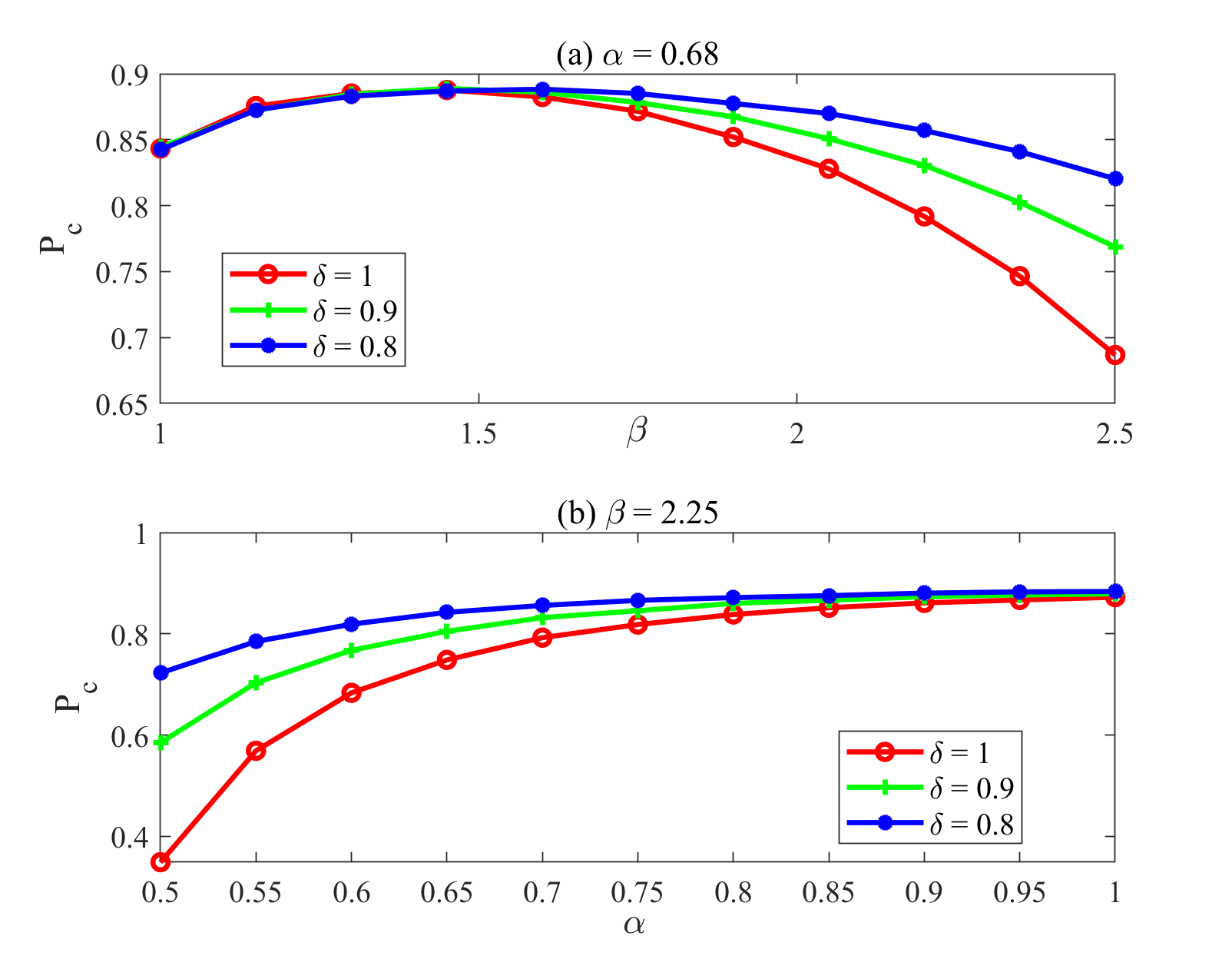} 
	\caption{Classification accuracy when crowd workers have different behavioral parameters.}
	\label{final2}
\end{figure}

\subsection{Crowdsourcing in the Presence of Spammers}
In this subsection, we present some simulation results to illustrate the advantage of our proposed method ASPT, in which  PT is employed to characterize the behavior of spammers. $W=50$ workers participate in a crowdsourcing task with $N=3$ microtasks and $G=3$ gold standard questions.  $f_p(p)$ is chosen as a uniform distribution $U(0.2,0.8)$, so that the average probability of a honest worker skipping a task is $m=0.6$. Let $f_r(r)$ be a uniform distribution expressed as $U(x,1)$ with $0\le x \le 1$, and thus we can have $\mu$ varying from 0.5 to 1.

\bgroup
\def\arraystretch{1.3}
\begin{table}[h]
	\centering
	\caption{Estimates of $M_0$ and $M_N$}
	\label{my-label}
	\setlength\tabcolsep{1.6pt}
\begin{tabular}{|c|c|c|c|c|c|c|c|c|c|c|}\hline
	\diaghead{\theadfont $M_0$~~~ $M_N$ 0.5}%
	{\ \ \ \ \ $M_N$}{$M_0$\ \ \ \ \ }
	&\thead{1}&\thead{3}&\thead{5} &\thead{7}&\thead{9}&\thead{11}&\thead{13}&\thead{15}&\thead{17}&\thead{19}\\   \hline
1	& 1,0  & 1,3  &  1,5   & 1,8   &  2,9  &  2,12  &  1,14  &  2,15  &  2,17  &2,20\\    \hline
3	&  3,1   &  3,2 &  3,5   &  4,7  &  4,9  &  3,11  &  3,14  &  3,15  &  3,18  &3,20\\    \hline
5	&  5,2   & 5,3  &  5,6  &   6,7  &  5,9  &  6,11  &  6,14  &  5,17  &  6,18  &  5,19\\  \hline
7	&   7,0  & 8,4  &  7,5  &   8,8  &  7,10  &  8,12  &  7,13  &  7,17  &  7,17  & 8,20\\    \hline
9	&  9,1   & 9,4  &  9,5   &  10,7  &  9,9  &  11,11  &  9,13  &  10,15  &  11,17  & 9,20\\    \hline
11	&  11,1   & 11,5  & 11,5   & 12,8   &  11,6  &  12,11   & 11,13   &  11,16  &  11,17  & 12,19\\    \hline
13	&   13,2  & 13,6 &  13,5   & 14,8   & 13,9   &  13,11   &  14,13  &  13,16  &  13,17  & 14,19\\    \hline
15	&  15,1   & 15,3  &  16,6  &  16,7  &  15,9  &  17,11   &  15,13  &  15,15  &  15,17  & 15,19\\    \hline
17	&  17,1   &  18,4 &  17,5   &  17,8  &  17,9  &  17,12  &  18,13  & 17,16  &  18,17  & 18,19\\    \hline
19	&  20,2   & 19,2  &  19,5  &  19,8   &  19,9  &  19,11  &  19,13  &  19,16  & 20,17   & 21,19\\    \hline
21	&   21,2  &  21,3 &  22,5   &  21,7  &  21,9  &  22,12  &  21,13  &  21,15  & 21,17   & 21,19\\    \hline
23	&  23,1   & 24,3  &   25,5  &  23,9  & 24,9   &  24,11  &  25,13  &  23,16  &  23,17  & 23,19\\    \hline
25	&  26,1   & 26,3 &  25,6    &  25,7  & 26,9   & 26,12   & 25,13   &  25,15  &  26,17  & 25,20\\    \hline
\end{tabular}
\end{table}
\egroup

First, we show the efficiency of our methods for estimating the parameters  $M_0$ and $M_N$. Table \ref{my-label} shows the estimation results of $M_0$ and $M_N$, when the true numbers of spammers are $M_0=\{1,3,\dots,19\}$ and $M_N=\{1,3,\dots,25\}$. Here, $\mu$ is set as 0.75. The estimation process is based on the distribution of the number of workers completing and skipping all the questions $W_{N+G}$ and $W_0$, and we can see from the table that most pairs of numbers $M_0$ and $M_N$ can be exactly estimated, and the estimation errors are at most $\pm 1$. 
\begin{figure}[ht]
	\centering
	\includegraphics[width=3.2in]{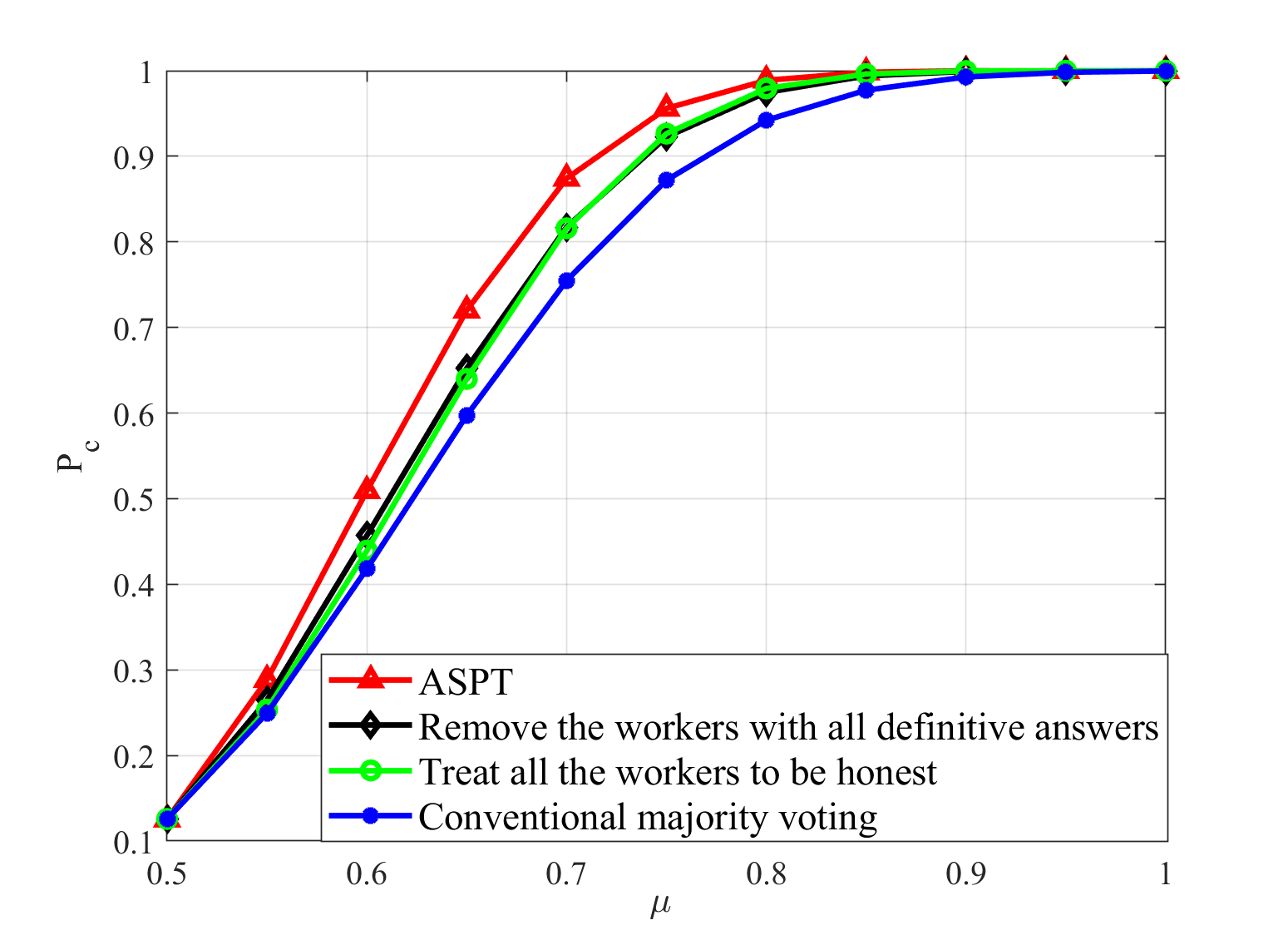} 
	\caption{Performance comparison with spammers as $\mu$ increases.}
	\label{Pcwithvariousmu_estimationofM0andMN}
\end{figure}

We present the performance comparison between different aggregation rules in Fig. \ref{Pcwithvariousmu_estimationofM0andMN}, where the quality of the crowd $\mu$ varies. For illustration, we assume that there are 14 spammers in a crowd of 50 workers, and we have 7 spammers completing all the questions and the other 7 skipping all the questions. As $\mu$ increases, we plot the probability of correct classification $P_c$ of four different weight assignment methods. The first one is the ASPT developed in this paper, where we employ PT for modeling the behavior of the spammers. 
In the second approach, we exclude the workers who submit all definitive answers and treat the remaining workers to be honest. The weights assigned to the honest workers are given by  $W_w=\mu^{-n}$ \cite{li2016multi}. The third one is where we consider the existence of spammer without incorporating PT, where all the spammers are assumed to be Type I and the FC treats all the workers to be honest, i.e., weight assignment rule is $W_w=\mu^{-n}$ no matter whether the workers submit all definitive answers or not. The last approach is conventional majority voting without a reject option, where all the workers are assigned the same weight. It can be seen in Fig.4 that at $\mu=0.5$, all the four curves merge to the same point. It is because when $\mu=0.5$, even the honest workers are making random guesses like a spammer. In this case, the FC collects no useful information from the crowd and the choice of weight assignment schemes does not make a difference. As the quality of the crowd, $\mu$, improves, the system performance also improves as expected. The proposed ASPT performs better than the method that excludes the workers with all definitive answers and the method that treats all the workers to be honest, which outperform the conventional majority voting approach that does not have a reject option. It should be noted that the second and the third methods have very similar performances. Compared to treating all the workers as honest ones, excluding the workers who submit all definitive answers has the advantage of removing the side effects of Type II spammers. At the same time, however, the second method may also remove the honest workers who submit all definitive answers for decision fusion, leading to a potential deterioration. Hence, this trade-off determines whether the second approach performs better than the third approach or not.

\begin{figure}[ht]
	\centering
	\includegraphics[width=3.2in]{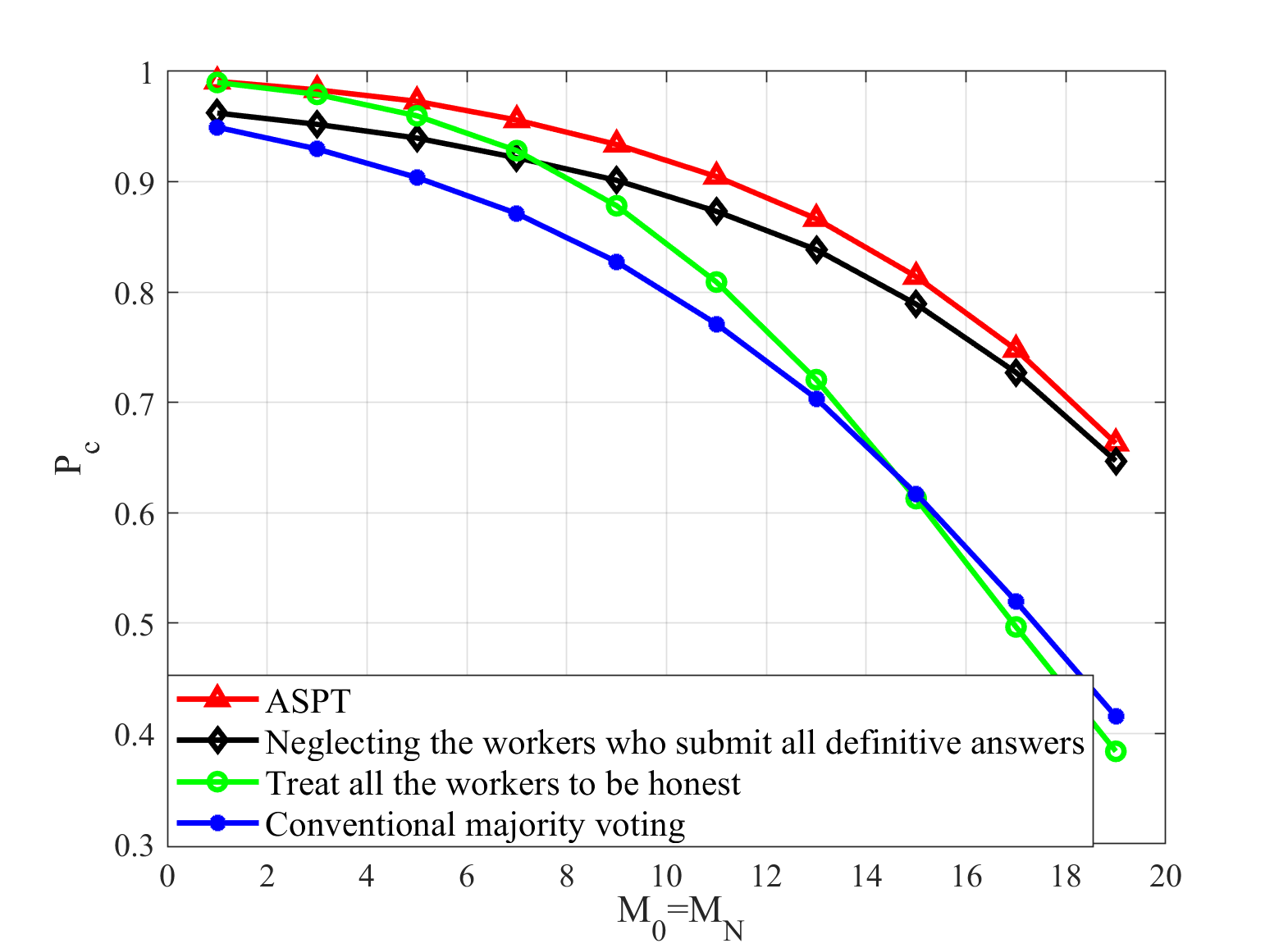} 
	\caption{Performance comparison with different numbers of spammers.}
	\label{Pcvariousspammers}
\end{figure}

In Fig. \ref{Pcvariousspammers}, we plot the performance comparison when the number of spammers changes. For simplicity, we set that $M_0=M_N$, and $\mu$ is fixed at 0.75. As the number of spammers increases, the classification performance degrades, where the ASPT method gives the best performance. Furthermore, there are two phenomena that need to be discussed:

1): When the number of spammers is small, the conventional majority voting method is outperformed by the one that treats all the workers as honest. However, this is not the case when the number of spammers is large. The reason is that with honest workers, the FC assigns a greater weight to the worker with a larger number of definitive answers. In the regime where $M_N$ is large, which means that the number of spammers completing all the questions is large, the impact from the spammers is much more severe on the performance with such a weight assignment scheme. Thus, the corresponding performance degrades significantly.

2): When the number of spammers is small, the method that excludes the workers who submit all definitive answers performs better than the method that treats all the workers to be honest, and vice versa. It can be explained by the fact that in the crowd when the percentage of spammers is small, the second method (the one that excludes the workers who submit all definitive answers) has a smaller probability to remove Type II spammers and has a higher probability to remove honest workers. On the other hand, when the percentage of spammers is large, the probability of excluding Type II spammers is large and that of excluding honest workers is small.

\begin{figure}[htb]
	\centering
	\includegraphics[width=3.2in]{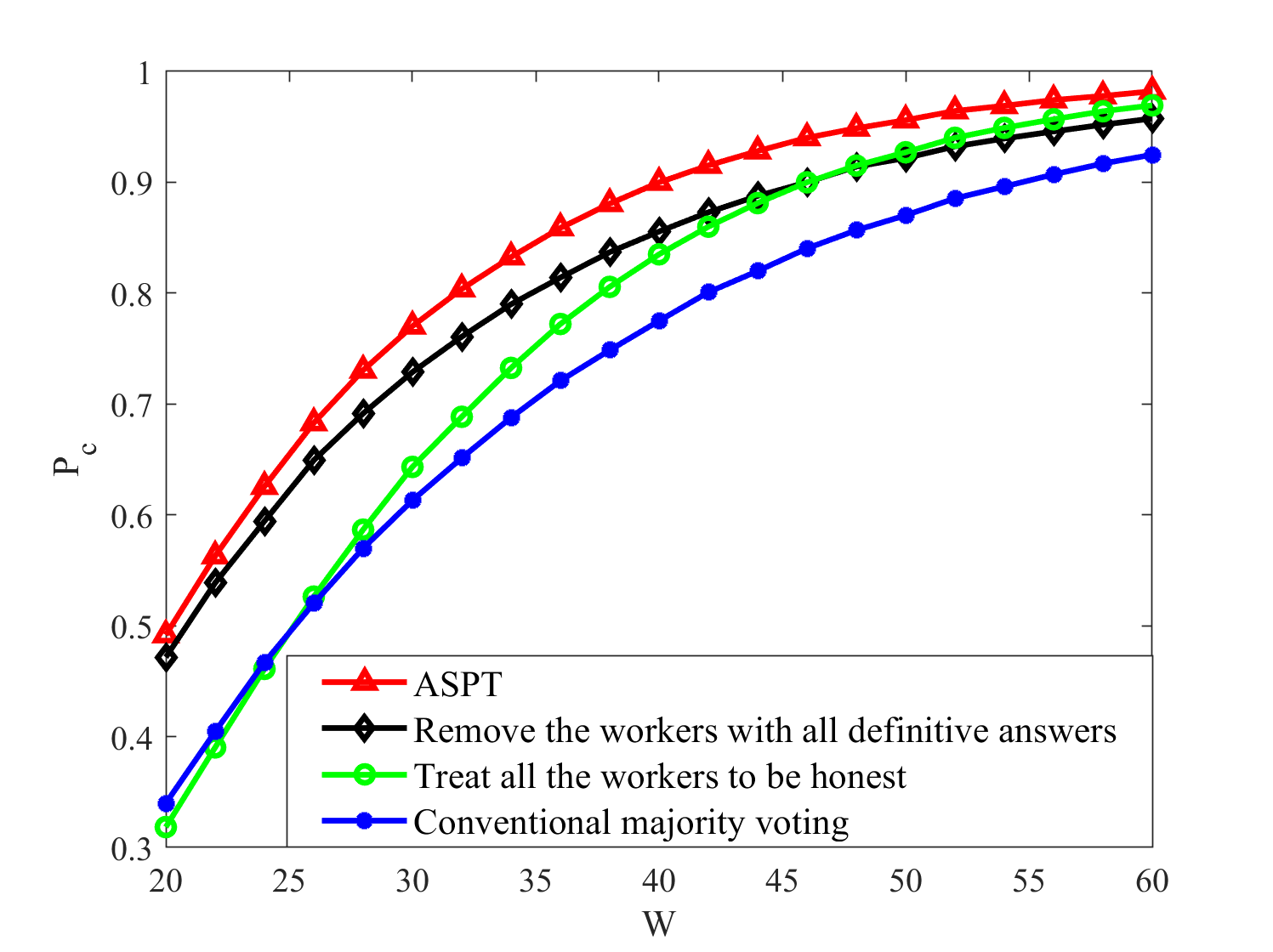} 
	\caption{Performance comparison as $W$ increases.}\label{lastfigure}
\end{figure}

Lastly, we keep the number of spammers in the crowd fixed, with $M_0=M_N=7$ and plot the asymptotic performance as the crowd population $W$ increases in Fig. \ref{lastfigure}. We can see that our proposed approach has the best performance among all the weight assignment strategies. Analogous to the explanations provided for results presented in Fig. \ref{Pcvariousspammers}, we observe that when $W$ is small, i.e., when the percentage of spammers is large, the method that excludes the workers with all definitive answers and the majority voting method outperform the method that treats all the workers to be honest, and vice versa.

\section{Conclusion}

We have explored a novel framework of crowdsourcing systems to solve classification problems, where the crowd workers may skip a microtask if the confidence regarding the question being asked is lower than a threshold. Our scheme is extremely effective in dealing with spammers as it: (i) exploits the behavioral differences of honest workers and spammers in realistic situations, where the rationality of humans is modeled via PT; (ii) estimates the number of spammers in the crowd and (iii) designs the optimal weight for every worker in the weighted majority voting fusion rule.
We provide analytical expressions for probability of correct classification and asymptotic system performance. Compared to rational decision makers,  the honest crowd workers and spammers behave in a different manner if PT is incorporated to model their rationality. To accurately characterize the system performance, the behavioral property of the crowd must be taken into consideration. Simulation results illustrate the  efficiency of our method compared to other weight assignment schemes that do not include the humans behavior using a prospect theoretic approach. 

This work employed a psychologically accurate description of human behavior in crowdsourcing environments. We  provided insights in designing strategies to ameliorate the side impacts of human spammers. Our study can also be applied to analyze and model sophisticated human behavior under different payment mechanisms in many applications. Our future work includes the study of task allocation in crowdsourcing considering the behavioral differences of the crowd. As the security issues of distributed inference systems are becoming increasingly important, we also plan to investigate the robustness of the proposed crowdsourcing algorithm against adversarial attacks.

\appendices	

\section{}\label{prospect}
Let the perception of the variable reward before answering the question be denoted as $Z$. According to the payment mechanism (\ref{payment}), $Z=(\mu_{\max}-\mu_{\min})T^G$ before answering any question. In case that the worker has already provided definitive answers to $i=1,\dots, L$ questions, the expected variable reward is 

\vspace{-0.1 in}
\begin{small}
\begin{equation}
    Z= (\mu_{\max}-\mu_{min})T^G\prod_{i=1}^{L}\frac{t_i}{T}
\end{equation}
\end{small}where $t_i$ is the worker's confidence on the $i^{\mbox{th}}$ question. Hence, we have $Z>0$.

If the worker decides to answer the question, there is a probability of $t$ that the answer is correct and $Z$ is multiplied by $\frac{1}{T}$, and a probability of $1-t$ that the answer is wrong leading to $Z=0$. By using the current expected reward $Z$ as a reference point, and applying the value function to the gains (losses) and the probability weighting function to the probabilities,  the subjective payoff if the worker answers this question is expressed as:

\vspace{-0.05 in}
\begin{small}
\begin{equation}
    SP(t) = w(t)v(Z(\frac{1}{T}-1))+w(1-t)v(-Z)
\end{equation}
\end{small}On the other hand, if the worker skips the question, the expected reward $Z$ stays the same and hence, the subjective payoff is $0$. The worker makes a decision by choosing the action which yields a higher subjective payoff

\vspace{-0.06 in}
\begin{small}
\begin{align}
{SP}\overset{{answer}}{\underset{{skip}}{\gtrless}}{{0. }}\end{align}
\end{small}which becomes the result of Theorem 1 after simplification.

\section{}\label{paymentA}
If a spammer skips $g$ out of $G$ gold standard questions and answers the remaining $G-g$ with random guesses, the expected monetary reward $E$ for the spammer is expressed as

\vspace{-0.1 in}
\begin{small}
\begin{align}
E&= (\mu_{\max}-\mu_{\min})T^G\prod\limits_{i=1}^G \alpha_{x_i}+\mu_{\min}\nonumber\\
&= (\mu_{\max}-\mu_{\min})T^G(\frac 1 2)^{G-g}(\frac 1 T)^{G-g}\nonumber+\mu_{\min}\\
&= (\mu_{\max}-\mu_{\min})(\frac {1} 2)^G(2T)^g+\mu_{\min},
\end{align}
\end{small}where $X=\{x_1,\ldots,x_G\}$ are the spammer's responses to the gold standard questions. Since $0\le g\le G$, $E$ is maximized as following

\vspace{-0.1 in}
\begin{small}
\begin{align}
&{\text{if}}\ T<\frac 1 {2} \Rightarrow g=0,\ \ 
{\text{if}}\ T>\frac 1 {2} \Rightarrow g=G.
\end{align}
\end{small}

\section{}\label{weightassignment2}
 When there are $M$ spammers in the crowd with $M_0$ skipping and $M_N$ completing all the questions, the expected weight contributed to the correct class is given by
 
 \vspace{-0.1 in}
 \begin{small}
\begin{align}\label{21}
&E_C[\mathbb{W}]\nonumber\\=&\sum\limits_{w=1}^{W-M}E_{p,r}\left[\sum\limits_{n=0}^{N}W_w(n) r(n)\mathbb{P}(n)\right]+\sum\limits_{w=1}^{M_0}W_w(n=0)\nonumber\\
&+\sum\limits_{w=1}^{M_N}\frac 1 {2^N}W_w(n=N)\nonumber\\
=&\sum\limits_{n=0}^N(W-M)W_w(n)\mu^n\binom{N}{n}(1-m)^nm^{N-n}\nonumber\\
&+\sum\limits_{n=0}^NM_0W_w(n)\delta(n)+\sum\limits_{n=0}^N\frac{M_N}{2^N}W_w(n)\delta(n-N)\nonumber\\
=&\sum\limits_{n=0}^N(W\!-\!M)W_w(n)\mu^n\mathbb P(n)\!+\!\sum\limits_{n=0}^N\frac{M_0}{\mathbb P(0)}W_w(n)\mathbb P(n)\delta(n)\nonumber\\
&+\sum\limits_{n=0}^N\frac{M_N}{2^N \mathbb P(N)}W_w(n)\mathbb P(n)\delta(n-N)\nonumber\\
=&\sum\limits_{n=0}^NW_w(n)S(n)\mathbb P(n)
\end{align}
\end{small}where $r(n)$ is the product of any $n$ out of $N$ variables $r_{w,i}$ for $i=1,\dots,N$, which represents the probability that $n$ answers are correct given $n$ definitive answers have been submitted\footnote{Candidate scores are assigned to the correct class only when all the definitive answers are correct.}. $\mathbb P(n)=\binom{N}{n}(1\!-\!m)^nm^{N\!-\!n}$ represents the probability that the $w^{\mbox{th}}$ worker submits a total of $n$ definitive answers. In the last step of (\ref{21}), $S(n)=(W-M)\mu^n+\frac{M_0}{m^N}\delta(n)+\frac{M_N}{2^N(1-m)^N}\delta(n-N)$.


Since ${\sum\limits_{n = 0}^N {\mathbb P(n)}  }=1$,  \eqref{21} is upper-bounded according to Cauchy-Schwarz inequality:

\vspace{-0.1 in}
\begin{small}
\begin{align}
&E_C[\mathbb W]=\sum\limits_{n=0}^NW_w(n)S(n)\mathbb P(n)\nonumber\\
\label{24}&\le \sqrt{\sum\limits_{n=0}^N(W_w(n)S(n))^2\mathbb P(n)}\sqrt{\sum\limits_{n=0}^N\mathbb P(n)}=\alpha
\end{align}
\end{small}Also note that equality holds in \eqref{24} only if $W_w(n)S(n)\sqrt{\mathbb P(n)}=\alpha\sqrt{\mathbb P(n)}$, where $\alpha$ is a positive constant such that $W_w(n)S(n)=\alpha$.

Therefore, the optimal weight assignment is obtained 

\vspace{-0.1 in}
\begin{small}
\begin{align}\label{Wwlong}
W_w(n) \!=\! \left[\!\left( {W \!\!-\!\! M} \right){\mu ^n}\!\! +\!\! \frac{{{M_0}}}{{{m^N}}}\delta \left( n \right) \!\!+\! \!\frac{{{M_N}}}{{{2^N}{{\left( {1\! -\! m} \right)}^N}}}\delta \left( {n\! -\! N} \right)\!\right]^{\!-\!1}
\end{align}
\end{small}

Note that the final classification decision $d^*$ corresponds to a unique $N$-bit word, and each bit, $1$ or $0$, represents the decision of a microtask. From the Proposition 1 in \cite{li2016multi}, we know that the classification rule (\ref{first}) is equivalent to the bit-by-bit decision for the $i^{\mbox{th}}$ bit

\vspace{-0.1 in}
\begin{small}
\begin{align}\label{sumtw}
{{\sum\limits_{w=1} ^{W}{T_w}  }}\overset{{{H}}_1}{\underset{{{H}}_0}{\gtrless}}{{0 }}\end{align}
\end{small}for $i=1,\dots,N$, with 

\vspace{-0.1 in}
\begin{small}
\begin{align}\label{Tw}
T_w={W_w(n)}\left( {{I_1}\left\langle {i,w} \right\rangle  - {I_0}\left\langle {i,w} \right\rangle } \right),
\end{align}
\end{small}where
${I_s}\left\langle {i,w} \right\rangle, s\in\{0,1\},$ is the indicator function that equals $1$ if the answer of the $w^{\mbox{th}}$ worker to the $i^{\mbox{th}}$ question is $s$, and it equals $0$ otherwise.

Hence, if a worker submits no definitive answers and skips all the questions, i.e., $n=0$, his/her decision is not taken into consideration for aggregation at the FC. Thus, we can assign any weight to the worker with $n=0$. Essentially we are neglecting Type I spammers and excluding them for classification.  For simplicity and consistency purpose, we drop the second term on the right hand side of (\ref{Wwlong}) and write the weight assignment as ${W_w(n)} = \left[\left( {W - M} \right){\mu ^n} +
\frac{{{M_N}}}{{{2^N}{{\left( {1 - m} \right)}^N}}}\delta \left( {n - N} \right)\right]^{-1}$.

\section{}\label{pf2}

Following Appendix B, the $i^{\mbox{th}}$ bit of the final aggregated $N$-bit word is determined by (\ref{sumtw}) and $T_w$ is the weighted decision from the $w^{\mbox{th}}$ worker. Let $H_s$ denote the hypothesis that a microtask has true answer to be $s$ for $s=0,1$. If the $w^{\mbox{th}}$ worker is honest, the probability mass function (PMF) of $T_w$ under hypothesis $H_s$, $\Pr \left( {{T_w}|{H_s}} \right)$, is given as

\vspace{-0.1 in}
\begin{small}
\begin{align}\label{tw}
&\Pr \left( {{T_w}=I(-1)^{t+1}W_w(n)|{H_s}} \right) \nonumber \\
&= \left\{ {\begin{array}{*{20}{c}}
	{r_{w,i}^{1 - \left| {s - t} \right|}{\left( {1 - {r _{w,i}}} \right)^{\left| {s - t} \right|}}\varphi_n(w), I = 1}\\
	{p_{w,i},\ \ \ \ \  I = 0}
	\end{array}} \right.,t\in \{0,1\}, 
\end{align}
\end{small}where $I = {I_1}\left\langle {i,w} \right\rangle  + {I_0}\left\langle {i,w} \right\rangle $, $\varphi_n(w) =(1-p_{w,i})\sum\limits_C {\prod\limits_{j = 1, j \ne i}^N {p_{w,j}^{{k_j}}{{\left( {1 - {p_{w,j}}} \right)}^{1 - {k_j}}}} }$ represents the probability that the $w^{\mbox{th}}$ worker gives a definitive answer to the $i^{\mbox{th}}$ question and the total number of definitive answers he/she submitted is $n$. $C$ is defined as the set
		
		\vspace{-0.1 in}
		\begin{small}
\begin{align}
C = \left\{ {\left\{ {{k_1}, \ldots ,{k_{i - 1}},{k_{i + 1}}, \ldots ,{k_N}} \right\}: \sum\limits_{j = 1,j\ne i}^N {{k_j} \!=\! N\! -\! n} } \right\}
\end{align}
\end{small}
with ${k_j} \in \left\{ {0,1} \right\}$ and $n\in\{1,\ldots,N\}$. On the other hand, if the $w^{\mbox{th}}$ worker is a Type II spammer who submits a definitive answer randomly, $T_w$ has the following probability mass function:

\vspace{-0.1 in}
\begin{small}
\begin{align}\label{twtype1}
\Pr (T_w) = \left\{\begin{matrix}
1/2, ~ ~~T_w=W_w(N) \\
-1/2,~~~ T_w=-W_w(N)
\end{matrix}\right.
\end{align}
\end{small}

Under the assumption that hypotheses $H_0$ and $H_1$ are equally likely, the probability of correct classification for the $i^{\mbox{th}}$ bit $P_{c,i}$ is ${P_{c,i}} = \frac{{1 + {P_{d,i}} - {P_{f,i}}}}{2}$, where $P_{d,i}$ is the probability of detection, i.e., deciding the $i^{\mbox{th}}$ bit to be ``1'' when the true bit is ``1'' and $P_{f,i}$ is the probability of false alarm, i.e., deciding the $i^{\mbox{th}}$ bit to be ``1'' when the true bit is ``0''. 

For the honest workers from a total of $W$ workers, let $G_0$ denote the subgroup  that decides ``0'' for $i^{\mbox{th}}$ microtask, $G_1$ the subgroup that decides ``1'' and $G_{\lambda}$ the subgroup that decides $\lambda$. Moreover, out of the $M_N$ Type II spammers we assume that there are $M_N^{\prime}$ spammers deciding ``1'' for the $i^{\mbox{th}}$ bit and $M_N^{\prime \prime}$ deciding ``0''.  We employ the result in (\ref{tw}) and assume that the workers answer the questions independently. Under $H_1$, the probability of the crowd's answer profile for the $i^{\mbox{th}}$ bit is $\{G_0, G_1,G_{\lambda}, M_N^{\prime},M_N^{\prime \prime}\}$ can be expressed as

\vspace{-0.1 in}
\begin{small}
\begin{align}\label{preF}
{ F}_i =& {\left(\frac 1 2\right)}^{M_N^{\prime}} {\left(\frac 1 2\right)}^{M_N^{\prime\prime}}\prod\limits_{w \in {G_{{\lambda}}}} {{p_{w}}}  \prod\limits_{w \in {G_{{0}}}} {\left( {1 - {r _{w,i}}} \right)} {\varphi _{n_w}}\left( w \right) \nonumber\\& \prod\limits_{w \in {G_{{1}}}} {{r _{w,i}}} {\varphi _{n_w}}\left( w \right)
\end{align}
\end{small}where  ${n_w}$ represents the total number definitive questions submitted by the individual.   
Let $q_n, -N\le n\le N$, denote the number of honest workers that submit $|{n}|$ total definitive answers to all the microtasks. Specifically,  $n<0$ indicates the group of honest workers that submit ``0'' for the $i^{\mbox{th}}$ bit while $n>0$ indicates ``1''. For $n=0$, $q_0$ represents the number of honest workers that submit $\lambda$ for the $i^{\mbox{th}}$ bit. Note that the number of honest workers in subgroups $G_0$ , $G_1$ and $G_{\lambda}$ are equal to $\sum_{n=-1}^{-N} q_n$, $\sum_{n=1}^{N} q_n$ and $q_0$, respectively. Denoting

\vspace{-0.1in}
\begin{small}
\begin{align}
{{\mathbb {G}}} = \{ &({q_{ - N}},{q_{ - N + 1}}, \ldots {q_N},M_N^{\prime},M_N^{\prime \prime}):\nonumber\\
&\sum\limits_{n =  - N}^N {{q_n} = W - M_N-M_0 } ,M_N^{\prime}+M_N^{\prime \prime}=M_N \},
\end{align}
\end{small}with natural numbers  $M_N^{\prime}$, $M_N^{\prime \prime}$, and $q_n$ for $\{ n=-N,\dots,0,\dots,N\}$.  From the result in (\ref{preF}),  the answer profile for the $i^{\mbox{th}}$ bit $\mathbb G$ has the following probability under $H_1$

\vspace{-0.1 in}
\begin{small}
	\begin{align}
	&F({{\mathbb G}}) \nonumber=\\& m^{q_0}{\left(\frac{1}{2}\right)}^{M_N}\!\!\prod\limits_{n = 1}^N {{{\left( {1\! -\! \mu } \right)}^{{q_{\! - \!n}}}}{\mu ^{{q_n}}}{{\left(\! {\binom{N\!-\!1}{n\!-\!1}{{\left( {1 \!-\! m} \!\right)}^n}{m^{N - n}}} \right)}^{{q_{ \!-\! n}} \!+\! {q_n}}}} 
	\end{align}
	\end{small}where we substitute the expression of ${\varphi _n}(w)$ using  ${\binom{N-1}{n-1}{{\left( {1 \!-\! m} \right)}^n}{m^{N - n}}}$. Based on the above results, the probability of detection $P_{d,i}$ can be expressed as 
	\vspace{-0.1 in}
	\begin{small}
\begin{align}\label{pdi}
{P_{d,i}} = \sum\limits_S \binom{W,M}{\mathbb{G}} { F_i}(\mathbb{G}) + \frac{1}{2}\sum\limits_{S^\prime} \binom{W,M}{\mathbb{G}}{ F_i}(\mathbb{G}) ,
\end{align}
\end{small}where $\binom{W,M}{\mathbb{G}} = \frac{{(W-M_0)!}}{M_N^{\prime}!M_N^{\prime \prime}!{\prod_{n =  - N}^N {{q_n}!} }}$ represents all possible combinations in the answer profile $\mathbb G$ and 

\vspace{-0.1 in}
\begin{small}
	\begin{align}
	&{S} \!=\! \left\{ {{{\mathbb G}}\!:\!\sum\limits_{n = 1}^N {({q_n} \!-\! {q_{ - n}})W_w(n)}  \!+\!(M_N^{\prime}\!-\!M_N^{\prime \prime})W_w(N) }>0 \right\}\\
	&{S}^\prime \! \!=\!\! \left\{ {{{\mathbb G}}\!:\!\sum\limits_{n = 1}^N {({q_n} \!-\! {q_{ - n}})W_w(n)}  \!+\!(M_N^{\prime}\!-\!M_N^{\prime \prime})W_w(N)  }=0 \right\}
	\end{align}
	\end{small}where $S$ represents the scenario where $\sum_{w=1}^{W}T_w>0$ and ``1'' is decided under $H_1$, and $S^{\prime}$ is the case where $\sum_{w=1}^{W}T_w=0$ and the FC decides ``1'' with probability $1/2$.

Similarly, we can obtain $P_{f,i}$ given  $p_{w,i}$ and $r_{w,i}$ as

\vspace{-0.1 in}
\begin{small}
\begin{align}
{P_{f,i}} = \sum\limits_S \binom{W,M}{\mathbb{G}} { F_i}^\prime({\mathbb G}) + \frac{1}{2}\sum\limits_{S^\prime} \binom{W,M}{\mathbb{G}}{ F_i}^\prime({\mathbb G}).
\end{align}
\end{small}
where

\vspace{-0.1 in}
\begin{small}
	\begin{align}
	&F^{\prime}({\mathbb G})=\nonumber\\&  m^{q_0}{\left(\frac{1}{2}\right)}^{M_N}\!\!\prod\limits_{n = 1}^N {{{\left( {1 \!-\! \mu } \right)}^{{q_n}}}{\mu ^{{q_{ \!-\! n}}}}{{\left(\! {\binom{N\!-\!1}{n\!-\!1}{{\left( {1 \!-\! m} \!\right)}^n}{m^{N - n}}} \right)}^{{q_{ \!-\! n}}\! +\! {q_n}}}} 
	\end{align}	
	\end{small}Then, the expected probability of correct classification for the $i^{\mbox{th}}$ bit $P_{c,i}$ can be obtained as
	
	\vspace{-0.1 in}
	\begin{small}
\begin{align}
{P_{c,i}} &= \frac{1}{2}+ \frac{1}{2}\sum\limits_S {\binom{W,M}{\mathbb{G}}} \left( {F_i\left( \mathbb{G} \right) - F_i^{\prime}\left( \mathbb{G} \right)} \right)\nonumber\\ &+ \frac{1}{4}\sum\limits_{S^\prime} {\binom{W,M}{\mathbb{G}}} \left( {F_i\left( \mathbb{G} \right) - F_i^{\prime} \left( \mathbb{G} \right)} \right)
\end{align}
\end{small}A correct classification result is obtained if and only if all the bits in the $N$-bit word are classified correctly, and recall that the microtasks are completed independently. The probability of correct classification of the final result is given as

\vspace{-0.1 in}
\begin{small}
\begin{align}
P_c=E\left[ {\prod \limits _{i=1}^N p_{c,i}} \right]=\prod \limits _{i=1}^N E\left[ {p_{c,i}} \right]=P_{c,i}^N,
\end{align}
\end{small}where $p_{c,i}$ is the realization of the probability of correct decision for the $i^{\mbox{th}}$ bit.
Therefore, the crowdsourcing system has overall  correct classification probability $P_c$ that is given by

\vspace{-0.1 in}
\begin{small}
\begin{align}
P_c=&\Big[\frac{1}{2}+ \frac{1}{2}\sum\limits_S {\binom{W,M}{\mathbb{G}}} \left( {F\left( \mathbb{G} \right) - F^{\prime}\left( \mathbb{G} \right)} \right) \nonumber\\&+ \frac{1}{4}\sum\limits_{S^\prime} {\binom{W,M}{\mathbb{G}}} \left( {F\left( \mathbb{G} \right) - F^{\prime} \left( \mathbb{G} \right)} \right)\Big]^N,
\end{align}
\end{small}


\section{}\label{asym}

For an honest worker, the statistic $T_w$ has PMF given in (\ref{tw}) and the expected $T_w$ of an honest worker under $H_1$ is given by

\vspace{-0.1 in}
\begin{small}
\begin{align}
{E_{H1}^{H}}\! =\! \mathbb{E}\left[\sum\limits_{t = 0}^1\! {\sum\limits_{n = 1}^N {{{\left( { \!-\! 1} \right)}^{t + 1}}{W_w(n)}{\left( {{r _{w,i}}} \right)^{t }}{\left( {1\! - \!{r _{w,i}}} \right)^{1\!-\!t}}{\varphi _n(w)}} }\right]
\end{align}
\end{small}Substituting the expression of $W_w(n)$ in (\ref{weight_assign}), we have

\vspace{-0.1 in}
\begin{small}
	\begin{align}
	{E_{H1}^H}&=\left(2\mu-1\right)\sum\limits_{n = 1}^N W_w(n)\binom{N-1}{n-1}{{\left( {1 - m} \right)}^n}{m^{N - n}} \\
	 =& \frac{2\mu-1}{W-M}\sum\limits_{n = 1}^{N-1} \binom{N-1}{n-1}{{\left(\frac {1 - m}{\mu} \right)}^n}{m^{N - n}}\nonumber\\& +\frac{\left(2\mu-1\right)\left(1-m\right)^N}{\left(W-M\right)\mu^N+\frac{M_N}{2^N\left(1-m\right)^N}}\\
	 =& \frac{\left(2\mu-1\right)\left(1-m\right)}{\left(W-M\right)\mu} \left(\frac{1-m}{\mu}+m\right)^{N-1}\nonumber\\&+\frac{\left(2\mu-1\right)\left(1-m\right)^NZ_M}{\left(W-M\right)\mu^N Z_M +M_N},
	\end{align}
	\end{small}where $Z_M={2^N\left(1-m\right)^N}$. The variance of the statistic $T_w$ for an honest worker can be expressed as:
	
	\vspace{-0.1 in}
	\begin{small}
\begin{align}
	V_{H1}^H &\triangleq  \left( {\mathbb{E}\left[ {T_w^2} \right] -{(E_{H_1}^H)}^2} \right)\\
	=& \mathbb{E}\left[\sum\limits_{t = 0}^1 \!{\sum\limits_{n = 1}^N {{{\left(W_w(n)\right)}^2}{\left( {{r _{w,i}}} \right)^{t}}{\left( {1 \!- \!{r _{w,i}}} \right)^{1-t}}{\varphi _n(w)}} } \right]  \!-\!(\!E_{H_1}^H\!)^2\\
	=&\frac{1}{\left(W-M\right)^2}\sum\limits_{n = 1}^{N-1} \binom{N-1}{n-1}{{\left(\frac {1 - m}{\mu^2} \right)}^n}{m^{N - n}}\nonumber\\&+\frac{\left(1-m\right)^N}{\left(\left(W-M\right)\mu^N+\frac{M_N}{2^N\left(1-m\right)^N}\right)^2}-(E_{H_1}^H)^2\\
	=&\frac{1-m}{\left(W-M\right)^2\mu^2}\left(\frac{1-m}{\mu^2}+m\right)^{N-1}\nonumber\\&+\frac{\left(1-m\right)^N Z_M^2}{\left(\left(W-M\right)\mu^N Z_M+M_N \right)^2}-(E_{H_1}^H)^2
	\end{align}
	\end{small}On the other hand, the statistic $T_w$ of Type II spammers has PMF (\ref{twtype1}). In this case, the expected value of $T_w$ under $H_1$ is given by
	
	\vspace{-0.1 in}
	\begin{small}
	\begin{align}
	{E_{H1}^S} = \sum\limits_{t = 0}^1 {\sum\limits_{n = N}^N {{{\left( { - 1} \right)}^{t + 1}}{W_w(n)}{\left( {{\frac 1 2}} \right)^{t }}{\left( { {\frac 1 2}} \right)^{1-t}}} } =0
	\end{align}
	\end{small}For Type II spammers, the variance of $T_w$ under $H_1$ is given by 
	
	\vspace{-0.1 in}
	\begin{small}
	\begin{align}
	V_{H1}^S &\triangleq \left( {\mathbb{E}\left[ {T_w^2} \right] -(E_{H_1}^S)^2}\right)\\
	&= \mathbb{E}\left[\sum\limits_{t = 0}^1 {\sum\limits_{n = N}^N {{{\left(W_w(n)\right)}^2}{\left( {{\frac 1 2}} \right)^{t}}{\left( {\frac 1 2} \right)^{1-t}}} } \right]\\
	&=\frac{1}{\left(\left(W-M\right)\mu^N+\frac{M_N}{2^N\left(1-m\right)^N}\right)^2}\nonumber\\
	&=\frac{Z_M^2}{\left(\left(W-M\right)\mu^N Z_M+M_N\right)^2}
	\end{align}
	\end{small}As the number of workers $W$ increases to infinity, according to the Central Limit Theorem, the statistic $\sum_{w=1}^W T_w$ can be approximated by a Gaussian random variable:
	
	\vspace{-0.1 in}
	\begin{small}
	\begin{align}
	    T_w {\sim}  \left\{\begin{matrix}
H_1: \mathcal{N}(M_1,V_1) \\
H_0: \mathcal{N}(M_0,V_0) 
\end{matrix}\right. ~ \text{as}\ w\rightarrow \infty
	\end{align}
	\end{small}
Under $H_1$, we have obtained the mean and variance of $T_w$ for a single honest worker and a spammer. Note that in the crowd there are $W-M$ honest workers and $M_N$ Type II spammers. Since the workers/spammers complete the tasks independently, we have

\vspace{-0.1 in}
\begin{small}
\begin{align}
M_1=&\left(W-M\right)E_{H1}^H+M_NE_{H1}^S\\
=& \frac{\left(2\mu-1\right)\left(1-m\right)}{\mu} \left(\frac{1-m}{\mu}+m\right)^{N-1}\nonumber\\&+\frac{(W-M)\left(2\mu-1\right)\left(1-m\right)^NZ_M}{\left(W-M\right)\mu^N Z_M +M_N}
\end{align}
\end{small}and

\vspace{-0.1 in}
\begin{small}
	\begin{align}
	V_1=&\left(W-M\right)V_{H1}^H+M_NV_{H1}^S\\
	=&\frac{1-m}{\left(W-M\right)\mu^2}\left(\frac{1-m}{\mu^2}+m\right)^{N-1}\nonumber\\+&\frac{\left((W-M)(1-m)^N+M_N\right) Z_M^2}{\left((W-M)\mu^N Z_M+M_N \right)^2}-\frac{M_1^2}{W-M}	
	\end{align}
	\end{small}From similar procedures, under $H_0$ we can obtain $M_0=-M_1=-\mathcal{M}$ and $V_0=V_1=\mathcal{V}$. 	Since the $i^{\mbox{th}}$ bit is determined via (\ref{sumtw}), it is clear to see that the probability of correct classification of the $i^{\mbox{th}}$ bit is $P_{c,i}= Q(-\frac{\mathcal{M}}{\sqrt{\mathcal{V}}})$. By considering the $N$ bits independently, we obtain the desired result.
	
\bibliographystyle{IEEEtran}
\bibliography{IEEEabrv,ref_Lqw}

\end{document}